\documentclass[copyright,creativecommons]{eptcs}
\usepackage{breakurl}             

\usepackage{amsmath}
\usepackage{amsthm}
\usepackage{amsfonts}

\usepackage{tikz}
\usepackage{tikz-qtree}

\newcommand{\lambdat}{\mathit{lambda}}
\newcommand{\unlambdat}{\mathit{unlambda}}

\newcommand{\Label}{\mathit{Label}}
\newcommand{\Var}{\mathit{Var}}
\newcommand{\Abs}{\mathit{Abs}}

\newcommand{\Fun}{\mathit{FUN}}

\newcommand{\defeq}{\stackrel{\mathit{def}}{=}}
\newcommand{\ang}[1]{\langle {#1} \rangle}

\newcommand{\Bool}{\mathit{Bool}}

\newtheorem{lemma}{Lemma}
\newtheorem{theorem}{Theorem}
\newtheorem{corollary}{Corollary}

\title{Control Flow Analysis for SF Combinator Calculus}
\author{Martin Lester
\institute{Department of Computer Science, University of Oxford, Oxford, UK}
\email{martin.lester@cs.ox.ac.uk}
}

\begin{document}
\maketitle

\begin{abstract}
Programs that transform other programs often require access to the
internal structure of the program to be transformed. This is at odds with
the usual extensional view of functional programming, as embodied by the
lambda calculus and SK combinator calculus. The recently-developed SF
combinator calculus offers an alternative, intensional model of computation
that may serve as a foundation for developing principled languages in which
to express intensional computation, including program transformation. Until
now there have been no static analyses for reasoning about or verifying
programs written in SF-calculus. We take the first step towards remedying
this by developing a formulation of the popular control flow analysis 0CFA
for SK-calculus and extending it to support SF-calculus. We prove its
correctness and demonstrate that the analysis is invariant under the usual
translation from SK-calculus into SF-calculus.

\noindent
\emph{Keywords:} control flow analysis;
SF-calculus;
static analysis;
intensional metaprogramming
\end{abstract}

\section{Introduction}
In order to reason formally about the behaviour of program transformations,
we must simultaneously consider the semantics of both the program being transformed
and the program performing the transformation.
In most languages, program code is not a first-class citizen:
typically, the code and its manipulation and execution
are encoded in an ad-hoc manner using the standard datatypes of the language.
Consequently, whenever we want to reason formally about program transformation,
we must first formalise the link between the encoding of the code and its semantics.

This is unsatisfying and it would be desirable to develop better techniques for
reasoning about program transformation in a more general way.
In order to do this, we must develop techniques for reasoning about
programs that manipulate other programs.
That is, we need to develop techniques for
reasoning about and verifying uses of \emph{metaprogramming}.
Metaprogramming can be split into \emph{extensional} and \emph{intensional} uses:
extensional metaprogramming involves joining together pieces of program code,
treating the code as a ``black box'';
intensional metaprogramming allows inspection and manipulation of
the internal structure of code values.

Unfortunately, as support for metaprogramming is relatively poor in most programming languages,
its study and verification is often not a priority.
In particular, the $\lambda$-calculus,
which is often thought of as the theoretical foundation of functional programming languages,
does not allow one to express programs that can distinguish between two
extensionally equal expressions with different implementations,
or indeed to manipulate the internal structure of expressions in any way.

However, the SF combinatory calculus~\cite{DBLP:journals/jsyml/Given-WilsonJ11}
does allow one to express such programs.
SF-calculus is a formalism similar to the familiar SK combinatory calculus,
which is itself similar to $\lambda$-calculus,
but avoids the use of variables and hence the complications of substitution and renaming.
SF-calculus replaces the $K$ of SK-calculus with a factorisation combinator $F$
that allows one to deconstruct or \emph{factorise} program terms in certain normal forms.
Thus it may be a suitable theoretical foundation for
programming languages that support intensional metaprogramming.

There has been some recent work on verification of programs using extensional metaprogramming,
mainly in languages that only allow the composition and execution
of well-formed code templates~\cite{DBLP:conf/lpar/BergerT10,DBLP:conf/popl/ChoiAYT11}.
In contrast, verification of intensional metaprogramming has been comparatively neglected.

There do not yet appear to be any \emph{static analyses} for verifying properties of SF-calculus programs.
We rectify this by formulating the popular analysis \emph{0CFA}~\cite{DBLP:conf/pldi/Shivers88} for SF-calculus.
We prove its correctness and argue, with reference to a new formulation of 0CFA for SK-calculus,
why it is appropriate to call the analysis 0CFA.
This provides the groundwork for more expressive analyses of programs that manipulate programs.

We begin in Section 2 by reviewing SK-calculus and 0CFA for $\lambda$-calculus;
we also present a summary of SF-calculus.
In Section 3, we reformulate 0CFA for SK-calculus and prove its correctness;
this guides our formulation and proof of 0CFA for SF-calculus in Section 4.
We discuss the precision of our analysis in Section 5
and compare it with some related work in Section 6.
We conclude by suggesting some future research directions in Section 7.

\section{Preliminaries}
\subsection{0CFA for Lambda Calculus}

0CFA~\cite{DBLP:conf/pldi/Shivers88} is a popular form of \emph{Control Flow Analysis}.
It is flow insensitive and context insensitive,
but precise enough to be useful for many applications,
including guiding compiler optimisations~\cite{DBLP:journals/csur/Midtgaard12,DBLP:conf/icfp/BergstromFLRS14},
providing autocompletion hints in IDEs
and noninterference analysis~\cite{DBLP:conf/csfw/0001OS13}.
It is perhaps the simplest static analysis that handles higher order functions,
which are a staple of functional programming.

Let us consider 0CFA for the $\lambda$-calculus.
0CFA can be formulated in many ways.
Following Nielson and others,
we present it as a system of constraints~\cite{DBLP:books/daglib/0098888}.
Suppose we wish to analyse a program $e$.
We begin by assigning a unique label $l$ (drawn from a set $\Label$)
to every subexpression
(variable, application or $\lambda$-abstraction) in $e$.
(Reusing labels does not invalidate the analysis,
and indeed this is done deliberately in proving its correctness,
but it does reduce its precision.)
We write $e^l$ to make explicit reference to the label $l$ on $e$.
We often write applications infix as $e_1 @^l e_2$ rather than $(e_1\ e_2)^l$
to make reference to their labels clearer.
We follow the usual convention that application associates to the left,
so $f\ g\ x$ (or $f @ g @ x$) means $(f\ g)\ x$ and not $f\ (g\ x)$.

Next, we generate constraints on a function $\Gamma$ by
recursing over the structure of $e$,
applying the rules shown in Figure~\ref{fig:0cfa-lambda}.
Finally, we solve the constraints to produce
$\Gamma : \Label \uplus \Var \rightarrow \mathcal{P}(\Abs)$ that indicates,
for each position indicated by a subexpression label $l$ or variable $x$,
an over-approximation of all possible expressions that may occur in that position during evaluation.
Abstractly represented values $v$ have the form $\Fun(x, l)$,
indicating any expression $\lambda x . e^l$ that binds the variable $x$ to a body with label $l$.
We say that $\Gamma \models e$ if $\Gamma$ is a solution for the constraints generated over $e$.

\begin{figure}
\[
\begin{array}{lrcccl}
\mbox{Labels} & \Label & \ni & l & & \\
\mbox{Variables} & \Var & \ni & x & & \\
\mbox{Labelled Expressions} & & & e & ::= & x^l \mid e_1 @^l e_2 \mid \lambda^l x . e \\
\mbox{Abstract Values} & \Abs & \ni & v & ::= & \Fun(x, l) \\
\mbox{Abstract Environment} & & & \Gamma & : & \Label \uplus \Var \rightarrow \mathcal{P}(\Abs)
\end{array}
\]

\[
\begin{array}{lcl}
\Gamma \models x^l & \iff & \Gamma(x) \subseteq \Gamma(l) \\
\Gamma \models \lambda^{l_1} x . e^{l_2} & \iff &
    \Gamma \models e^{l_2} \wedge \Fun(x, l_2) \in \Gamma(l_1) \\
\Gamma \models e^{l_1}_1 @^{l} e^{l_2}_2 & \iff &
    \Gamma \models e^{l_1}_1 \wedge \Gamma \models e^{l_2}_2 \wedge
    (\forall \Fun(x, l_3) \in \Gamma(l_1) . \Gamma(l_2) \subseteq \Gamma(x) \wedge \Gamma(l_3) \subseteq \Gamma(l))
\end{array}
\]

\caption{0CFA for $\lambda$-calculus}
\label{fig:0cfa-lambda}
\end{figure}

The intuition behind the rules for $\Gamma \models e$ is as follows:
\begin{itemize}
\item \emph{If $e = x^l$:} $\Gamma(x)$ must over-approximate the values that can be bound to $x$.
\item \emph{If $e = \lambda^{l_1} x. {e'}^{l_2}$:} A $\lambda$-expression is represented abstractly
as $\Fun(x, l_2)$
by the variable it binds and the label on its body.
Furthermore, its subexpressions must be analysed.
\item \emph{If $e = e_1^{l_1} @^l e_2^{l_2}$:} For any application, consider all the functions that may occur on the left
and all the arguments that may occur on the right.
Each argument may be bound to any of the variables in the functions.
The result of the application may be the result of any of the function bodies.
Again, all subexpressions of the application must be analysed.
\end{itemize}

In order to argue about the soundness of the analysis,
we must first formalise what $\Gamma$ means.
We can do this via a \emph{labelled semantics} for $\lambda$-calculus
that extends the usual rules for evaluating $\lambda$-calculus expressions
to labelled expressions.
We can then prove a coherence theorem~\cite{DBLP:conf/esop/WandW02}:
if $\Gamma \models e^l$ and (in the labelled semantics)
$e^l \rightarrow {e'}^{l'}$, then $\Gamma \models {e'}^{l'}$
and $\Gamma(l') \subseteq \Gamma(l)$.
In fact, by induction on the length of derivations of $\rightarrow^*$,
this is also true for $e^l \rightarrow^* {e'}^{l'}$.
Note in particular that, as $\Gamma(l') \subseteq \Gamma(l)$,
$\Gamma(l)$ gives a sound over-approximation to
the subexpressions that may occur at the top level
at any point during evaluation.

As a concrete example, consider the $\lambda$-expression
$(\lambda^1 x . x^0) @^4 (\lambda^3 y . y^2)$,
which applies the identity function to itself.
We have chosen $\Label$ to be the natural numbers $\mathbb{N}$.
A solution for $\Gamma$ is:
\[
\begin{array}{ccccc}
\Gamma(x) = \Gamma(0) = \Gamma(3) = \Gamma(4) = \{ \Fun(y, 2) \} & &
\Gamma(1) = \{ \Fun(x, 0) \} & &
\Gamma(y) = \{ \}
\end{array}
\]
In particular, this correctly tells us that the result of evaluating the expression
is abstracted by $\Fun(y, 2)$; that is, the identity function with body labelled $2$ that binds $y$.


Note that the constraints on $\Gamma$ may easily be solved by:
initialising every $\Gamma(l)$ to be empty;
iteratively considering each unsatisfied constraint in turn
and enlarging some $\Gamma(l)$ in order to satisfy it;
stopping when a fixed point is reached and all constraints are satisfied.
Done naively, this takes time $\mathcal{O}(n^5)$
for a program of size $n$~\cite{DBLP:books/daglib/0098888}.
With careful ordering of the consideration of constraints,
this improves to $\mathcal{O}(n^3)$.
The best known algorithm for 0CFA uses
an efficient representation of the sets in $\Gamma$
to achieve $\mathcal{O}(n^3/\log n)$ complexity~\cite{DBLP:conf/popl/Chaudhuri08}.
Van Horn and Mairson showed that,
for linear programs (in which each bound variable occurs exactly once),
0CFA gives the same result as actually evaluating the program;
hence it is PTIME-complete~\cite{DBLP:conf/sas/HornM08}.

0CFA has been the inspiration for many other analyses.
For example, $k$-CFA adds $k$ levels of context to distinguish between
uses of the same function from different points within a program.
This improves precision, but at the cost of making the analysis EXPTIME-complete,
even for $k = 1$~\cite{DBLP:conf/icfp/HornM08}.
CFA2 similarly tries to use context to improve precision,
but via a pushdown abstraction, which remains practical~\cite{DBLP:journals/corr/abs-1102-3676}.

\subsection{SK Combinatory Calculus}

\begin{figure}[p]
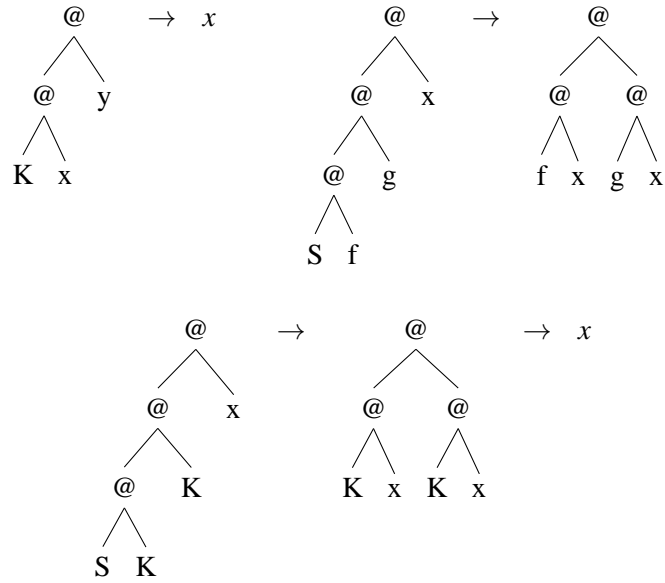

\[
\begin{array}{ccccccc}
\Tree [.@ [.@ K x ] y ] & \rightarrow & x &
\hspace{1em} &
\Tree [.@ [.@ [.@ S f ] g ] x ] & \rightarrow & \Tree [.@ [.@ f x ] [.@ g x ] ]
\end{array}
\]
\vspace{1ex}
\[
\begin{array}{ccccc}
\Tree [.@ [.@ [.@ S K ] K ] x ] & \rightarrow & \Tree [.@ [.@ K x ] [.@ K x ] ]
& \rightarrow & x
\end{array}
\]
\caption{Terms of SK-calculus viewed as trees.
Above: the reduction rules for $S$ and $K$.
Below: evaluation of the identity function $S\ K\ K$.}
\label{fig:sk-tree}
\end{figure}

\emph{Combinatory logic} is a Turing-powerful formalism for computation
that is similar in style to the $\lambda$-calculus,
but without bound variables and the associated complications
of capture-avoiding substitution and $\alpha$-conversion~\cite{Hindley:2008:LCI:1388400}.
From the perspective of term rewriting systems,
a combinator is a named constant $C$ with an associated rewrite rule
$C\ \overline{x} \rightarrow t(\overline{x})$,
where $\overline{x}$ is a sequence of variables of fixed length
and $t(\overline{x})$ is term built from the variables in $\overline{x}$ using application;
that is, $t(\overline{x})$ is an applicative term.

The \emph{SK Combinatory Calculus} (or just SK-calculus)
is the rewrite system involving terms built from just two atomic combinators, $S$ and $K$:
\[
\begin{array}{ccc}
S\ f\ g\ x & \rightarrow & f\ x\ (g\ x) \\
K\ x\ y & \rightarrow & x
\end{array}
\]

A combinator can also be viewed as a function acting on terms;
hence applicative terms $t$ built from combinators are also functions.
Using just $S$ and $K$,
it is possible to express all functions encodable in the $\lambda$-calculus.
For example, $S\ K\ K$ encodes the identity function.
Figure~\ref{fig:sk-tree} shows the rewrite rules
and the evaluation of the identity
with terms depicted as trees.

From a $\lambda$-calculus perspective,
a combinator can be viewed as a closed $\lambda$-term built by wrapping
a purely applicative term in $\lambda$-abstractions:
\[
\begin{array}{ccc}
S & \equiv & \lambda f . \lambda g . \lambda x . f\ x\ (g\ x) \\
K & \equiv & \lambda x . \lambda y . x
\end{array}
\]
This leads to an obvious translation $\lambdat(t)$ from SK-calculus into $\lambda$-calculus:
\[
\begin{array}{rcl}
\lambdat(S) & \defeq & \lambda f . \lambda g . \lambda x . f\ x\ (g\ x) \\
\lambdat(K) & \defeq & \lambda x . \lambda y . x \\
\lambdat(t_1\ t_2) & \defeq & \lambdat(t_1)\ \lambdat(t_2)
\end{array}
\]
There are a number of translations $\unlambdat(e)$ from $\lambda$-calculus into SK-calculus,
including the following~\cite{Hindley:2008:LCI:1388400}:
\[
\begin{array}{rcll}
\unlambdat(x) & = & x \\
\unlambdat(e_1\ e_2) & = & \unlambdat(e_1)\ \unlambdat(e_2) \\
\unlambdat(\lambda x . e) & = & \unlambdat_x (e) \\
\unlambdat_x(x) & = & S\ K\ K \\
\unlambdat_x(e) & = & K\ \unlambdat(e) & \mbox{if $x$ does not occur free in $e$} \\
\unlambdat_x(e\ x) & = & \unlambdat(e) & \mbox{if $x$ does not occur free in $e$} \\
\unlambdat_x(e_1\ e_2) & = & S\ \unlambdat_x(e_1)\ \unlambdat_x(e_2) & \mbox{if neither of the above applies} \\
\unlambdat_x(\lambda y . e) & = & \unlambdat_x(\unlambdat(\lambda y . e))
\end{array}
\]

This translation is left-inverse to the $\lambda$-translation;
that is $\unlambdat(\lambdat(t)) = t$.
However, it is not right-inverse.

The rewrite rules of combinatory calculus are very simple to implement, as:
there is no need to track bound variables;
the number of rewrite rules is small and fixed;
and all transformations are \emph{local}.
Here ``local'' means that,
viewing a term as a graph,
each transformation involves only a small, bounded number of edge additions and deletions,
all affecting nodes that are either within a bounded distance of the combinator
or are newly created (with the number of new nodes also being bounded).
Because of this simplicity,
combinators have frequently been considered as a basis for hardware or virtual machines
for executing functional programs~\cite{turner1979new,Clarke:1980:SSK:800087.802798}.
Combinators can be thought of as an assembly language for functional programs
(although often an expanded set of combinators~\cite{turner1979another} is used to
avoid a combinatorial explosion in the size of the compiled program).

\subsection{SF Combinatory Calculus}

\begin{figure}[p]
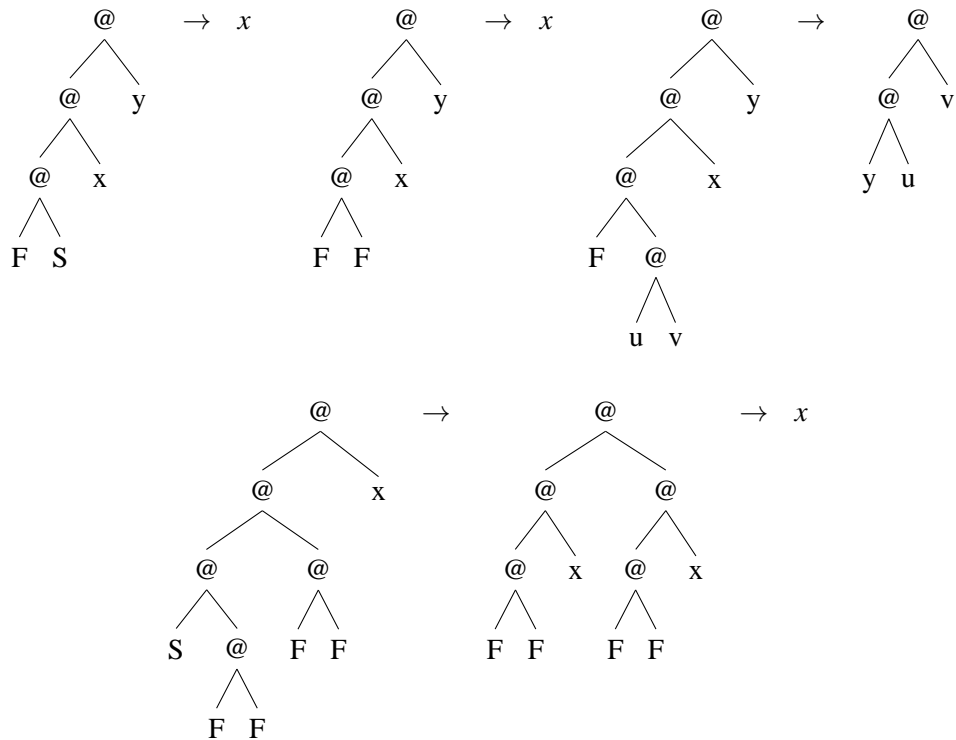

\[
\begin{array}{ccccccccccc}
\Tree [.@ [.@ [.@ F S ] x ] y ] &
\rightarrow &
x &
&  
\Tree [.@ [.@ [.@ F F ] x ] y ] &
\rightarrow &
x 
&
\Tree [.@ [.@ [.@ F [.@ u v ] ] x ] y ] &
\rightarrow &
\Tree [.@ [.@ y u ] v ]
\end{array}
\]
\vspace{1ex}
\[
\begin{array}{ccccc}
\Tree [.@ [.@ [.@ S [.@ F F ] ] [.@ F F ] ] x ] & \rightarrow & \Tree [.@ [.@ [.@ F F ] x ] [.@ [.@ F F ] x ] ]
& \rightarrow & x
\end{array}
\]

\caption{Terms of SF-calculus treated as trees.
Above: the reduction rules for $F$ on atoms and compound terms.
Below: evaluation of the identity function $S\ (F\ F)\ (F\ F)$.}
\label{fig:sf-tree}
\end{figure}

The \emph{SF Combinatory Calculus} is a recently-developed system of combinators
for expressing computation that manipulates
the internal structure of programs~\cite{DBLP:journals/jsyml/Given-WilsonJ11}.
It consists of just two combinators: $S$ and $F$.
$S$ is the same as in SK-calculus.
$F$ is a \emph{factorisation} combinator that allows non-atomic expressions
to be split up into their component parts;
it has two reduction rules (also depicted in Figure~\ref{fig:sf-tree}):

\[
\begin{array}{cccl}
F\ f\ x\ y & \rightarrow & x & \mbox{if $f = S$ or $f = F$} \\
F\ (u\ v)\ x\ y & \rightarrow & y\ u\ v & \mbox{if $u\ v$ is a factorable form}
\end{array}
\]

A \emph{factorable form} is a term of the form
$S$, $S\ u$, $S\ u\ v$, $F$, $F\ u$ or $F\ u\ v$,
for any terms $u$ and $v$;
that is, a term cannot be factorised if it could be reduced at the outermost level.
This ensures that reduction is globally \emph{confluent}, regardless of the reduction order chosen.
It also means that the usual notion of (weak) head reduction is not sufficient for
evaluating programs in this system:
if a term is of the form $F\ f\ x\ y$, then $f$ must be head reduced (if possible)
before applying the reduction rule for $F$.

$F$ stretches our usual notion of what constitutes a combinator slightly,
as it has two rewrite rules,
with the conclusion of the second not being built from application of its arguments,
as it deconstructs the application $u\ v$.
Nonetheless, it is still fair to call SF-calculus a combinatory calculus,
as terms in the calculus are still built solely from application of its atoms
$S$ and $F$.

\paragraph{Confluence and the theory of weak equality.}
Confluence means that,
for any terms $u$, $v$ and $v'$,
if $u \rightarrow^* v$ and $u \rightarrow^* v'$,
it follows that there is a term $w$ with $v \rightarrow^* w$ and $v' \rightarrow^* w$.
This property can be proved for SF-calculus using the standard technique of parallel reductions.
The \emph{weak equality} relation $=_w$ is the symmetric, reflexive, transitive closure
of the reduction relation $\rightarrow$.
From confluence and the fact that the terms $S$ and $F$ are irreducible,
we can conclude that there are terms $u$ and $v$ such that $u \neq_w v$.
That is, the equational theory of $=_w$ for SF-calculus is \emph{consistent}.

The obvious way of adding a factorisation operator to $\lambda$-calculus has
no restriction to factorable forms equivalent to that for $F$.
Consequently, adding this operator breaks confluence,
so the resulting theory of weak equality is not consistent.

\paragraph{Extensional equality.}
Two terms are \emph{extensionally equal}
if they compute the same function, perhaps in different ways.
Within the SF-calculus, it is possible to \emph{distinguish} between two such terms.
Consequently, SF-calculus cannot be translated into $\lambda$-calculus.
For example, consider $t_1 = F\ F\ S$ and $t_2 = F\ S\ S$.
For any term $u$,
we have $t_1\ u = F\ F\ S\ u \rightarrow^* S$
and $t_2\ u = F\ S\ S\ u \rightarrow^* S$,
so $t_1$ and $t_2$ are extensionally equal
(and behave like the term $K\ S$ of SK-calculus).
In SK-calculus or $\lambda$-calculus,
if two terms $t_1$ and $t_2$ are extensionally equal,
then we can replace one with the other without changing the result of a computation.
However, this is not the case in SF-calculus, as we can use $F$ to construct a term $v$
(schematically $v = \lambda t . F\ t\ \_\ (\lambda u . \lambda v . F\ u\ \_\ (\lambda x . \lambda y . y))$) 
such that $v\ t_1 \rightarrow^* F$ and $v\ t_2 \rightarrow^* S$.

In SK-calculus, it is possible to extend the theory of weak equality $=_w$
with a rule corresponding to $\eta$-reduction,
yielding a theory of extensional equality $=_\mathit{ext}$
such that $t_1 =_\mathit{ext} t_2$ if and only if
$t_1$ and $t_2$ are extensionally equal~\cite{Hindley:2008:LCI:1388400}.
Clearly, any reasonable attempt to extend the theory of weak equality for SF-calculus
to an extensional theory of equality will be inconsistent,
as it will equate $S$ with $F$.
This is in direct and deliberate contrast to SK-calculus.

%

\paragraph{Expressivity of SF-calculus.}
There is a translation from SK-calculus into SF-calculus:
$K$ can be expressed as $F\ F$.
Hence all functions expressible in SK-calculus and thus $\lambda$-calculus
are expressible in SF-calculus.

SF-calculus is \emph{structure complete},
in the sense that it can pattern match over normal forms of terms (those having no redexes)
and distinguish between any two different terms in normal form.
In particular, for any two such terms $t_1$ and $t_2$, there is a term $e$ such that
we have $e\ t_1 \rightarrow^* S$ and $e\ t_2 \rightarrow^* F$.
Adding System F types to SF-calculus (and giving names to some other combinators),
the resulting calculus can encode and type
an interpreter for its own language~\cite{DBLP:conf/icfp/JayP11}.
Thus it presents a promising theoretical foundation for reasoning about programs
that transform other programs,
for example by means of partial evaluation.

As a more concrete example of the sorts of programs we might write in SF-calculus,
suppose we have an expression $f\ x\ y$ and
we want to flip its arguments to give $f\ y\ x$~\cite{DBLP:journals/jfp/GrundyMO06}.
For example, perhaps we are writing an optimising compiler,
$f$ is a commutative function that is not strict in both arguments and
we expect $f\ y\ x$ to execute faster than $f\ x\ y$.
Schematically, we could write a program performing this transformation as:
\[
\lambda a . F\ a\ \_\ (\lambda b . \lambda y . F\ b\ \_\ (\lambda f . \lambda x . f\ y\ x ))
\]
where $\_$ is any dummy value. Expressed purely in terms of $S$ and $F$, this can be written as:
\[
(S F (F F S)) (F F (S (S (F F (S (F F S) (F F))) (S F (F F S))) (F F (S (F F (S (S (F F) (F F)))) (F F)))))
\]
Obviously, because of its lack of readability,
SF-calculus (like SK-calculus and $\lambda$-calculus)
is not suitable for use directly by human programmers.

%
%
%
%
%
%

\section{0CFA for SK-Calculus}
Before we can formulate 0CFA for SF-calculus,
we must first consider what it means for SK-calculus.
A central idea in 0CFA for $\lambda$-calculus is that
the analysis computes an over-approximation of
the expressions that may be bound to a variable.
It seems a little perverse to apply this to SK-calculus,
where there are deliberately no variables.

As SK-calculus can be translated into $\lambda$-calculus,
it is easy enough to translate a term $t$ of SK-calculus
into an equivalent $\lambda$-expression $\lambdat(t) = e$ and analyse that.
We could define our analysis by
$\Gamma \models_{\mathit{SK}} t \iff \Gamma \models_{\lambda} \lambdat(t)$.
Furthermore, any SK-calculus reduction $t \rightarrow t'$
corresponds to a sequence of 2 (for $K$) or 3 (for $S$) $\lambda$-calculus $\beta$-reductions.
So for $\lambdat(t') = e'$ we have $e \rightarrow^* e'$.
Then, as 0CFA is coherent with evaluation following an arbitrary $\beta$-reduction strategy,
we would have $\Gamma \models_{\lambda} e'$ and hence $\Gamma \models_{\mathit{SK}} t'$,
showing the coherence of our combinatory 0CFA with evaluation.

But what then is the meaning of the resulting analysis?
We can answer this by reversing the translation
(for example, using $\unlambdat$)
to produce a labelled semantics for SK-calculus
and 0CFA rules that apply directly to SK combinatory terms.

\subsection{Labelled Semantics}

\begin{figure}[h]
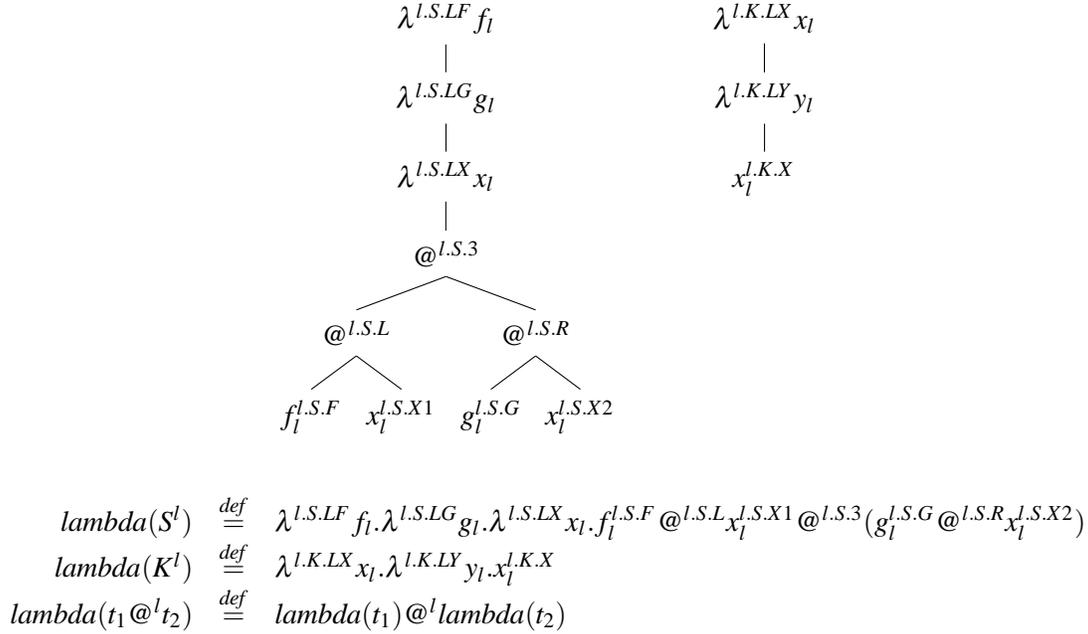

\[
\begin{array}{ccc}
\Tree [.{$\lambda^{l.S.LF} f_l$} [.{$\lambda^{l.S.LG} g_l$} [.{$\lambda^{l.S.LX} x_l$}
[.{$@^{l.S.3}$} [.{$@^{l.S.L}$} $f_l^{l.S.F}$ $x_l^{l.S.X1}$ ] [.{$@^{l.S.R}$} $g_l^{l.S.G}$ $x_l^{l.S.X2}$ ] ] ] ] ]
&
\hspace{2ex}
&
\Tree [.{$\lambda^{l.K.LX} x_l$} [.{$\lambda^{l.K.LY} y_l$} $x_l^{l.K.X}$ ] ]
\end{array}
\]

\[
\begin{array}{rcl}
\lambdat(S^l) & \defeq & \lambda^{l.S.LF} f_l . \lambda^{l.S.LG} g_l . \lambda^{l.S.LX} x_l .
    f_l^{l.S.F} @^{l.S.L} x_l^{l.S.X1} @^{l.S.3} (g_l^{l.S.G} @^{l.S.R} x_l^{l.S.X2}) \\
\lambdat(K^l) & \defeq & \lambda^{l.K.LX} x_l . \lambda^{l.K.LY} y_l . x_l^{l.K.X} \\
\lambdat(t_1 @^l t_2) & \defeq & \lambdat(t_1) @^l \lambdat(t_2)
\end{array}
\]
\caption{The labelled $\lambda$-calculus translation $\lambdat(t)$ (below),
with $\lambdat(S^l)$ (upper-left) and $\lambdat(K^l)$ (upper-right) illustrated as trees.}
\label{fig:label-lambda}
\end{figure}

First we will look at the result of $\beta$-reducing
expressions $\lambdat(S\ f\ g\ x)$ and $\lambdat(K\ x\ y)$
using the labelled semantics of $\lambda$-calculus.
We begin by extending $\lambdat$ to produce labelled expressions,
as shown in Figure~\ref{fig:label-lambda}.

Here we have extended $\Label$ to give an easy, syntactic way
of associating a fixed set of ``sublabels'' with each ordinary label.
The choice of names for the sublabels is somewhat arbitrary,
although we have chosen them to match the structure of the expressions.
It is important at this stage that the label on each expression remains distinct,
so that we do not lose precision in formulating our analysis.
We can now use $\lambdat$ to produce labelled reduction rules for SK-calculus:
\[
\begin{array}{rcl}
K^{l_2} @^{l_3} x^{l_1} @^{l_4} y^{l_0} & \rightarrow & x^{l_1} \\
S^{l_3} @^{l_4} f^{l_2} @^{l_5} g^{l_1} @^{l_6} x^{l_0} & \rightarrow &
(f^{l_2} @^{l_3.S.L} x^{l_0}) @^{l_3.S.3} (g^{l_1} @^{l_3.S.R} x^{l_0})
\end{array}
\]

Note that in the conclusion of the reduction of $S$,
there are new labels that were not present in the original program.
These are the sublabels from the applications introduced by $\lambdat$.
In the $\lambda$-calculus formulation,
these are present inside the $\lambda$-expression before reduction;
the reduction exposes them.
A consequence of this is that, in analysing a term of SK-calculus,
we will need to consider labels that do not occur in the term.
If the set of labels were infinite, this might pose a problem for an analysis.
However, this is not the case,
as the names of the labels are syntactically derived from the label on $S$;
only a finite, statically derivable set of labels may arise during the execution of a term.

\subsection{Analysis Rules}

\begin{figure}

\[
\begin{array}{lrcccl}
\mbox{Base Labels} & \mathbb{N} & \ni & n & & \\
\mbox{Sublabel Names} & & & s & ::= & S.0 \mid S.1 \mid S.2 \mid S.3 \mid S.L \mid S.R \mid K.0 \\
\mbox{Labels} & \Label & \ni & l & ::= & n \mid n.s \\
\mbox{Labelled Terms} & & & t & ::= & S^n \mid K^n \mid t_1 @^l t_2 \mid \ang{x}^l \\
\mbox{Abstract Values} & \Abs & \ni & v & ::= & S_0^{n} \mid S_1^{n} \mid S_2^{n} \mid K_0^{n} \mid K_1^{n} \\
\mbox{Abstract Environment} & & & \Gamma & : & \Label \rightarrow \mathcal{P}(\Abs) \\
\mbox{Abstract Activation} & & & \varphi & : & \Label \rightarrow \Bool
\end{array}
\]
\[
\begin{array}{lcl}
\Gamma, \varphi \models S^n & \iff & S_0^{n} \in \Gamma(n) \wedge (\varphi(n) \Rightarrow \Gamma,\varphi \models t_{S^n}) \\
\Gamma, \varphi \models K^n & \iff & K_0^{n} \in \Gamma(n) \\
\Gamma, \varphi \models t_1^{l_1} @^{l_3} t_2^{l_2} & \iff &
    \phantom{\mbox{} \wedge \mbox{}} \Gamma, \varphi \models t_1 \wedge \Gamma, \varphi \models t_2 \\
& & \mbox{} \wedge \forall S_0^n \in \Gamma(l_1) . \Gamma(l_2) \subseteq \Gamma(n.S.0) \wedge S_1^n \in \Gamma(l_3) \\
& & \mbox{} \wedge \forall S_1^n \in \Gamma(l_1) . \Gamma(l_2) \subseteq \Gamma(n.S.1) \wedge S_2^n \in \Gamma(l_3) \\
& & \mbox{} \wedge \forall S_2^n \in \Gamma(l_1) . \Gamma(l_2) \subseteq \Gamma(n.S.2) \wedge \Gamma(n.S.3) \subseteq \Gamma(l_3)
    \wedge \varphi(n) \\
& & \mbox{} \wedge \forall K_0^n \in \Gamma(l_1) . \Gamma(l_2) \subseteq \Gamma(n.K.0) \wedge K_1^n \in \Gamma(l_3) \\
& & \mbox{} \wedge \forall K_1^n \in \Gamma(l_1) . \Gamma(n.K.0) \subseteq \Gamma(l_3) \\
\Gamma, \varphi \models \ang{x}^l & \iff & \mathit{true}
\end{array}
\]
\[
t_{S^n} \defeq (\ang{f}^{n.S.0} @^{n.S.L} \ang{x}^{n.S.2}) @^{n.S.3} (\ang{g}^{n.S.1} @^{n.S.R} \ang{x}^{n.S.2})
\]

\caption{0CFA for SK-calculus}
\label{fig:0cfa-sk}
\end{figure}

We are now able to translate the rules of 0CFA for $\lambda$-calculus
into new rules for SK-calculus, as shown in Figure~\ref{fig:0cfa-sk}.
Note that a label is now either a base label (as before, taken from $\mathbb{N}$)
or a base label suffixed with a sublabel name (taken from a fixed, finite set).
In performing the translation,
we have eliminated some unnecessary or trivial constraints,
such as those for tracking the 2nd argument to $K$, which is never used.
We have restricted the grammar of abstract values
to just instances of $S$ and $K$ with different numbers of arguments applied.
We have also made a small change from 0CFA for $\lambda$-calculus.
The constraints that express the result of reducing an $S$
are only \emph{activated} if it is possible for that $S$ to be reduced.
This may improve precision slightly,
but would be unsound in the $\lambda$-calculus setting,
where we can reduce the expression corresponding to $S$
even if it only has 1 or 2 arguments.
It will be more important for SF-calculus.
In order to track whether constraints for an instance have been activated,
we introduce a new component $\varphi : \Label \rightarrow \Bool$ to the solution of the constraints,
with $\varphi(n)$ being true when the constraints for $S^n$ are active.

The intuitive meaning of $S_0^n \in \Gamma(l)$ is that $S^n$ may occur at the point labelled $l$,
hence the rule for $\Gamma, \varphi \models S^n$.
The meaning of $S_1^n \in \Gamma(l)$ is that a term built from applying 1 argument to $S^n$ may occur at $l$,
or that $S^n$ may occur as the 1st left child of the term tree node labelled $l$.
The meaning of $S_2^n$ is analogous (but for 2 arguments or the 2nd left child),
as is that of $K_0^n$ and $K_1^n$ (but for $K$, not $S$).

The abstract values in $\Gamma(n.S.0)$ are meant to over-approximate
the values that may occur as the 1st argument to $S^n$;
similarly for $\Gamma(n.S.1)$ and the 2nd argument,
and analogously for $n.S.2$ and $n.K.0$.

This leads to the explanation of the conjunction of conditions for $\Gamma, \varphi \models t_1^{l_1} @^{l_3} t_2^{l_2}$.
For example, the condition involving $\forall S_0^n$ ensures that, if $S^n$ may occur in function position, then:
the abstraction $\Gamma(n.S.0)$ of the 1st argument of $S^n$ over-approximates
the arguments that may be supplied by $t_2$;
and the result of the application needs only 2 more arguments for a reduction to occur.
The condition involving $\forall S_1^n$ and the first part of the condition with $\forall S_2^n$ are similar.
The condition on $\forall K_0^n$ is analogous to that for $\forall S_0^n$.
The condition with $\forall K_1^n$ simply says that the result of reducing $K$
may be anything that occurs as its 1st argument.

The second part of the condition on $\forall S_2^n$ is more complicated.
In the event that $S^n$ may receive 3 arguments and hence be reduced,
it introduces constraints for the conclusion of the reduction,
which are those generated by analysis of the constant applicative term $t_{S^n}$.
It also says that the result of the reduction may be anything that occurs at the root of that term,
which has label $n.S.3$.

The introduction of the constraints for $t_{S^n}$ is forced by asserting $\varphi(n)$,
which produces the corresponding constraints in the rule for $\Gamma, \varphi \models S^n$.
The use of $\varphi$ avoids the introduction of a recursive loop in the constraint rules;
an alternative method would be to use a coinductive definition of $\models$.
Within $t_{S^n}$ we use dummy terms of the form $\ang{x}^l$ to give the leaf node of the term tree a label;
here $f$, $g$ and $x$ have no meaning (other than to make reading the rules easier) and play no role in the analysis.


This analysis may seem like a step backwards,
as we have replaced a small set of general rules for $\lambda$-calculus
with a larger, more specific set of rules for SK-calculus.
However, there are a number of benefits.
Firstly, the rules for S can be used directly in 0CFA for SF-calculus.
Secondly, they reveal the meaning of 0CFA in SK-calculus:
the abstract values at a labelled point tell us which combinators may occur
at that point \emph{and} locally at its left branches.
This insight will be key in both producing an accurate analysis for F
and for justifying why it is reasonable to call that analysis 0CFA.
Finally, because SK-calculus does not have to deal
with arbitrary substitution or the intricacies of name-binding,
the proof of correctness for this system is considerably simpler
than that for $\lambda$-calculus.

Recall the example of 0CFA for $\lambda$-calculus involving
applying the identity function to itself.
The corresponding SK-calculus term and a solution for its analysis are shown
in Figure~\ref{fig:sk-example-analysis}.
Note that $\Gamma(10) = \Gamma(4) = \{ S^2_2 \}$,
indicating that the result of evaluation has $S^2$ as its second left child;
that is, it is the second identity function $(S^2 @^3 K^1 @^4 K^0)$.

\begin{figure}
\[
\begin{array}{rclcrclcrclcrcl}
\Gamma(0) & = & \{ K^0_0 \} & &
\Gamma(1) & = & \{ K^1_0 \} & &
\Gamma(2) & = & \{ S^2_0 \} & &
\Gamma(2.S.0) & = & \{ K^1_0 \} \\
\Gamma(2.S.1) & = & \{ K^0_0 \} & &
\Gamma(3) & = & \{ S^2_1 \} & &
\Gamma(4) & = & \{ S^2_2 \} & &
\Gamma(5) & = & \{ K^5_0 \} \\
\Gamma(5.K.0) & = & \{ S^2_2 \} & &
\Gamma(6) & = & \{ K^6_0 \} & &
\Gamma(6.K.0) & = & \{ S^2_2 \} & &
\Gamma(7) & = & \{ S^7_0 \} \\
\Gamma(7.S.0) & = & \{ K^6_0 \} & &
\Gamma(7.S.1) & = & \{ K^5_0 \} & &
\Gamma(7.S.2) & = & \{ S^2_2 \} & &
\Gamma(7.S.3) & = & \{ S^2_2 \} \\
\Gamma(7.S.L) & = & \{ K^6_1 \} & &
\Gamma(7.S.R) & = & \{ K^5_1 \} & &
\Gamma(8) & = & \{ S^7_1 \} & &
\Gamma(9) & = & \{ S^7_2 \} \\
\Gamma(10) & = & \{ S^2_2 \} & &
\varphi(7) & = & \mathit{true}
\end{array}
\]
\[
\Gamma, \varphi \models (S^7 @^8 K^6 @^9 K^5) @^{10} (S^2 @^3 K^1 @^4 K^0)
\]
\caption{Solution of the analysis for application of identity to itself in SK-calculus.}
\label{fig:sk-example-analysis}
\end{figure}

\subsection{Correctness}

We now prove the correctness of this analysis for SK-calculus.
First we make some observations about the satisfaction of constraints:

\begin{lemma}[SK Substitution]
If $\Gamma, \varphi \models t_1^{l_1} @^{l_3} t_2^{l_2}$,
as well as $\Gamma, \varphi \models {t'_1}^{l'_1}$ and $\Gamma, \varphi \models {t'_2}^{l'_2}$,
with $\Gamma(l'_1) \subseteq \Gamma(l_1)$,
$\Gamma(l'_2) \subseteq \Gamma(l_2)$
and $\Gamma(l'_3) \supseteq \Gamma(l_3)$
then $\Gamma, \varphi \models {t'_1}^{l'_1} @^{l'_3} {t'_2}^{l'_2}$.
\end{lemma}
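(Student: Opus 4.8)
The statement is a monotonicity/substitution property for the application constraint, and the natural approach is to simply verify the right-hand side of the $\Gamma, \varphi \models t_1^{l_1} @^{l_3} t_2^{l_2}$ clause (Figure~\ref{fig:0cfa-sk}) with the primed labels substituted, using the hypotheses. The plan is: first, the two conjuncts $\Gamma, \varphi \models {t'_1}^{l'_1}$ and $\Gamma, \varphi \models {t'_2}^{l'_2}$ are given directly as hypotheses. What remains is to check the five universally-quantified conjuncts (for $S_0^n$, $S_1^n$, $S_2^n$, $K_0^n$, $K_1^n$) and the activation condition $\varphi(n)$ inside the $S_2^n$ case.

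For each quantified conjunct, I would argue as follows. Take, say, $S_0^n \in \Gamma(l'_1)$. Since $\Gamma(l'_1) \subseteq \Gamma(l_1)$, we also have $S_0^n \in \Gamma(l_1)$, so the original satisfaction $\Gamma, \varphi \models t_1^{l_1} @^{l_3} t_2^{l_2}$ gives us $\Gamma(l_2) \subseteq \Gamma(n.S.0)$ and $S_1^n \in \Gamma(l_3)$. Now $\Gamma(l'_2) \subseteq \Gamma(l_2) \subseteq \Gamma(n.S.0)$, giving the first part; and $S_1^n \in \Gamma(l_3) \subseteq \Gamma(l'_3)$ — wait, the hypothesis is $\Gamma(l'_3) \supseteq \Gamma(l_3)$, i.e.\ $\Gamma(l_3) \subseteq \Gamma(l'_3)$, so indeed $S_1^n \in \Gamma(l'_3)$, giving the second part. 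The cases for $S_1^n$ and $K_0^n$ are identical in shape. For $S_2^n \in \Gamma(l'_1)$: again $S_2^n \in \Gamma(l_1)$, so we get $\Gamma(n.S.2) \supseteq \Gamma(l_2) \supseteq \Gamma(l'_2)$ (using $\Gamma(l'_2) \subseteq \Gamma(l_2)$), the inclusion $\Gamma(n.S.3) \subseteq \Gamma(l_3) \subseteq \Gamma(l'_3)$, and $\varphi(n)$, all transferring straightforwardly. For $K_1^n \in \Gamma(l'_1)$: $K_1^n \in \Gamma(l_1)$ yields $\Gamma(n.K.0) \subseteq \Gamma(l_3) \subseteq \Gamma(l'_3)$. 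In every case the pattern is the same: contravariant use of the inclusions on $l_1, l_2$ to move membership/inclusion hypotheses into the original constraint, then covariant use of the inclusion on $l_3$ (and the fact that the sublabels $n.S.i$, $n.K.0$ are shared, not primed) to push the conclusions back out.

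There is essentially no obstacle here; the proof is a routine unfolding. The one subtlety worth flagging is the direction of the hypotheses: the inclusions on the \emph{input} positions $l_1, l_2$ point from primed to unprimed ($\Gamma(l'_i) \subseteq \Gamma(l_i)$), whereas the inclusion on the \emph{output} position $l_3$ points the other way ($\Gamma(l_3) \subseteq \Gamma(l'_3)$); this contravariance/covariance split is exactly what makes the argument go through, and it mirrors the analogous property one would prove for the $\lambda$-calculus rules. The only place one must be slightly careful is that the sublabels $n.S.0$, $n.S.1$, $n.S.2$, $n.S.3$, $n.K.0$ appearing in the constraints are determined by the abstract value $S_i^n$ or $K_i^n$ itself, not by the label $l_1$ on which it sits, so they are unchanged when we pass from $l_1$ to $l'_1$; thus no hypothesis about primed sublabels is needed. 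I would present the proof as a short case analysis over the form of the abstract value in $\Gamma(l'_1)$, with one representative case ($S_0^n$, say) written out and the rest dismissed as analogous.
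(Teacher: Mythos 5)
Your proof is correct and is exactly the argument the paper has in mind: the paper dismisses this lemma as ``trivial by inspection of the constraints generated by $@$,'' and your case analysis (moving memberships from $\Gamma(l'_1)$ into $\Gamma(l_1)$, applying the original constraint, and pushing the conclusions back out via the inclusions on $l'_2$ and $l'_3$) is precisely that inspection written out. Your observation that the sublabels $n.S.i$ and $n.K.0$ are indexed by the abstract value rather than by the position label is the right point to flag, and the covariance/contravariance remark matches the intended reading.
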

\begin{proof}
Trivial by inspection of the constraints generated by $@$.
\end{proof}

\begin{lemma}[SK Reduction Coherence]
For any top-level reduction $t^l \rightarrow {t'}^{l'}$,
if $\Gamma, \varphi \models t^l$ then
\emph{1)} $\Gamma, \varphi \models {t'}^{l'}$ and
\emph{2)} $\Gamma(l') \subseteq \Gamma(l)$.
\end{lemma}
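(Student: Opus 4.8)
The plan is to do a case analysis on which of the two labelled reduction rules produced $t^l \rightarrow {t'}^{l'}$, and in each case to unfold the $@$-constraints along the spine of the redex, chaining them so that abstract values propagate from the head combinator down into the argument positions; conclusions (1) and (2) then fall out. Throughout, I will use that the constraints for a subterm are conjuncts of the constraints for the whole term, so anything of the form $\Gamma, \varphi \models s$ for a syntactic subterm $s$ of $t$ is available for free.

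For the $K$-rule, $t^l = K^{l_2} @^{l_3} x^{l_1} @^{l_4} y^{l_0}$ (so $l = l_4$) and ${t'}^{l'} = x^{l_1}$ (so $l' = l_1$). From $\Gamma, \varphi \models K^{l_2} @^{l_3} x^{l_1}$ I get $K_0^{l_2} \in \Gamma(l_2)$ (via $\Gamma, \varphi \models K^{l_2}$), and instantiating the $\forall K_0^n$ clause at $n = l_2$ gives $\Gamma(l_1) \subseteq \Gamma(l_2.K.0)$ and $K_1^{l_2} \in \Gamma(l_3)$. From the outer application $\cdots @^{l_4} y^{l_0}$, instantiating its $\forall K_1^n$ clause at $n = l_2$ gives $\Gamma(l_2.K.0) \subseteq \Gamma(l_4)$. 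Composing, $\Gamma(l') = \Gamma(l_1) \subseteq \Gamma(l_2.K.0) \subseteq \Gamma(l_4) = \Gamma(l)$, which is (2); and (1) is immediate since $\Gamma, \varphi \models x^{l_1}$ is one of the conjuncts of $\Gamma, \varphi \models t^l$.

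For the $S$-rule, $t^l = S^{l_3} @^{l_4} f^{l_2} @^{l_5} g^{l_1} @^{l_6} x^{l_0}$ (so $l = l_6$) and ${t'}^{l'} = (f^{l_2} @^{l_3.S.L} x^{l_0}) @^{l_3.S.3} (g^{l_1} @^{l_3.S.R} x^{l_0})$ (so $l' = l_3.S.3$). Starting from $S_0^{l_3} \in \Gamma(l_3)$ and instantiating in turn the $\forall S_0^n$, $\forall S_1^n$, $\forall S_2^n$ clauses of the three spine applications at $n = l_3$, I obtain successively $S_1^{l_3} \in \Gamma(l_4)$ with $\Gamma(l_2) \subseteq \Gamma(l_3.S.0)$; then $S_2^{l_3} \in \Gamma(l_5)$ with $\Gamma(l_1) \subseteq \Gamma(l_3.S.1)$; and finally $\Gamma(l_0) \subseteq \Gamma(l_3.S.2)$, $\Gamma(l_3.S.3) \subseteq \Gamma(l_6)$, and $\varphi(l_3)$. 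The last fact is the crucial one: the rule for $\Gamma, \varphi \models S^{l_3}$ now forces $\Gamma, \varphi \models t_{S^{l_3}}$, and $\Gamma(l_3.S.3) \subseteq \Gamma(l_6)$ is conclusion (2). For (1), observe that $t_{S^{l_3}}$ has exactly the same application-node labels $l_3.S.L$, $l_3.S.3$, $l_3.S.R$ as ${t'}^{l'}$, differing only at its leaves (dummies $\ang{f}^{l_3.S.0}$, $\ang{g}^{l_3.S.1}$, $\ang{x}^{l_3.S.2}$ in place of $f^{l_2}$, $g^{l_1}$, $x^{l_0}$). So I apply the SK Substitution lemma three times: at $@^{l_3.S.L}$, replacing $\ang{f}^{l_3.S.0}, \ang{x}^{l_3.S.2}$ by $f^{l_2}, x^{l_0}$; at $@^{l_3.S.R}$, replacing $\ang{g}^{l_3.S.1}, \ang{x}^{l_3.S.2}$ by $g^{l_1}, x^{l_0}$; and at the root $@^{l_3.S.3}$, replacing the two dummy subterms by the two just-rebuilt ones. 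Each application's hypotheses hold because $\Gamma, \varphi \models f^{l_2}$, $\Gamma, \varphi \models g^{l_1}$, $\Gamma, \varphi \models x^{l_0}$ are conjuncts of $\Gamma, \varphi \models t^l$, because the needed inclusions $\Gamma(l_2) \subseteq \Gamma(l_3.S.0)$ etc.\ are exactly those derived by the chaining, and because no application label changes (so the $\Gamma(l'_3) \supseteq \Gamma(l_3)$ side-condition is a trivial equality). This yields $\Gamma, \varphi \models {t'}^{l'}$.

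The chaining of $@$-constraints is routine; the only step needing care is the sublabel bookkeeping in the $S$-case — in particular recognising that $t_{S^n}$ was defined precisely so that its application-node labels coincide with those emitted by the $S$-reduction rule, which is exactly what makes SK Substitution (able only to change leaf positions and to preserve or enlarge the result label) strong enough to bridge $t_{S^{l_3}}$ and the real contractum. I do not expect any genuine obstacle beyond keeping the label arithmetic straight; one could instead unfold the constraints for ${t'}^{l'}$ directly, but that just re-proves SK Substitution inline.
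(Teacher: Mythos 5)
Your proof is correct and follows essentially the same route as the paper's: chain the $@$-constraints along the spine to extract the inclusions and $\varphi(l_3)$, read off Condition 2 directly, and bridge $t_{S^{l_3}}$ to the contractum via the SK Substitution Lemma (your explicit three-fold, bottom-up application of that lemma is just a more careful spelling-out of what the paper compresses into one invocation). The $K$ case matches the paper's argument exactly.
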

\begin{proof}
Case split on the two kinds of top-level reduction ($S$ and $K$).

\emph{Case $S$:}
We have $t^l = S^{l_3} @^{l_4} f^{l_2} @^{l_5} g^{l_1} @^{l_6} x^{l_0}$
and $t'^{l'} = (f^{l_2} @^{l_3.S.L} x^{l_0}) @^{l_3.S.3} (g^{l_1} @^{l_3.S.R} x^{l_0})$.
Expanding the constraints for $\Gamma, \varphi \models t^l$, we have:
$S^{l_3}_0 \in \Gamma(l_3)$;
$S^{l_3}_1 \in \Gamma(l_4)$;
$S^{l_3}_2 \in \Gamma(l_5)$; hence
$\Gamma(l_3.S.3) \subseteq \Gamma(l_6)$ (proving Condition 2) and $\varphi(l_3)$ is true.
As $\Gamma, \varphi \models S^{l_3}_0$ and $\varphi(l_3)$, we have $\Gamma, \varphi \models t_{S_{l_3}}$.
But $t_{S_{l_3}}$ can be turned into $t'$ by substituting $f$, $g$ and $x$ at its leaves.
So to prove Condition 1, we just need to show that we can use the Substitution Lemma.
Now as $\Gamma, \varphi \models t^l$, we get:
$\Gamma, \varphi \models f^{l_2}$;
$\Gamma, \varphi \models g^{l_1}$; and
$\Gamma, \varphi \models x^{l_0}$.
Furthermore:
from $S^{l_3}_0 \in \Gamma(l_3)$ we get $\Gamma(l_2) \subseteq \Gamma(l_3.S.0)$;
from $S^{l_3}_1 \in \Gamma(l_4)$ we get $\Gamma(l_1) \subseteq \Gamma(l_3.S.1)$; and
from $S^{l_3}_2 \in \Gamma(l_3)$ we get $\Gamma(l_0) \subseteq \Gamma(l_3.S.2)$.
So the Substitution Lemma can be used to prove Condition 1.

\emph{Case $K$:}
We have $t^l = K^{l_2} @^{l_3} x^{l_1} @^{l_4} y^{l_0}$
and $t^{l'} = x^{l_1}$.
From $\Gamma, \varphi \models t^l$ we have $\Gamma, \varphi \models x^{l_1}$,
showing Condition 1.
Expanding the constraints for $\Gamma, \varphi \models t^l$ further, we get:
$K^{l_2}_0 \in \Gamma(l_2)$;
hence $\Gamma(l_1) \subseteq \Gamma(l_2.K.0)$ and $K^{l_2}_1 \in \Gamma(l_3)$;
thus $\Gamma(l_2.K.0) \subseteq \Gamma(l_4)$.
Combining these gives $\Gamma(l_1) \subseteq \Gamma(l_4)$, proving Condition 2.
\end{proof}

\begin{theorem}[SK Evaluation Coherence]
For any reduction in context $C[t^{l_1}]^{l_2} \rightarrow C[t'^{l'_1}]^{l'_2}$,
if $\Gamma, \varphi \models C[t^{l_1}]^{l_2}$ then
\emph{1)} $\Gamma, \varphi \models C[t'^{l'_1}]^{l'_2}$ and
\emph{2)} $\Gamma(l'_2) \subseteq \Gamma(l_2)$.
\end{theorem}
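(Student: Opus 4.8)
The plan is to prove this by induction on the structure of the evaluation context $C$, using the SK Reduction Coherence lemma for the base case and the SK Substitution lemma for the inductive step. An evaluation context $C$ is generated by the grammar $C ::= [\cdot] \mid C @^l t \mid t @^l C$ (together with whatever shape is needed to reduce under the left argument of an $F$, though for SK-calculus plain applicative contexts suffice), so the induction naturally follows this shape.

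For the base case $C = [\cdot]$, the reduction $C[t^{l_1}]^{l_2} \rightarrow C[t'^{l'_1}]^{l'_2}$ is exactly a top-level reduction $t^{l_1} \rightarrow t'^{l'_1}$ (with $l_2 = l_1$ and $l'_2 = l'_1$), so both conclusions are immediate from SK Reduction Coherence. For the inductive step, suppose $C = C' @^{l_2} u^{l_u}$, so that $C[t^{l_1}]^{l_2} = C'[t^{l_1}]^{l_3} @^{l_2} u^{l_u}$ for some label $l_3$ on the hole-containing subterm, and the reduction happens inside $C'$. From $\Gamma, \varphi \models C[t^{l_1}]^{l_2}$ we extract, by inspecting the $@$ constraints, that $\Gamma, \varphi \models C'[t^{l_1}]^{l_3}$ and $\Gamma, \varphi \models u^{l_u}$. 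The induction hypothesis applied to the subreduction $C'[t^{l_1}]^{l_3} \rightarrow C'[t'^{l'_1}]^{l'_3}$ gives $\Gamma, \varphi \models C'[t'^{l'_1}]^{l'_3}$ and $\Gamma(l'_3) \subseteq \Gamma(l_3)$. Now I would invoke the SK Substitution lemma with $t_1 = C'[t^{l_1}]$, $t'_1 = C'[t'^{l'_1}]$, $t_2 = t'_2 = u$, taking $l'_2 = l_2$ (the label on the enclosing application is unchanged by the reduction): the hypotheses $\Gamma(l'_3) \subseteq \Gamma(l_3)$, $\Gamma(l_u) \subseteq \Gamma(l_u)$ and $\Gamma(l_2) \supseteq \Gamma(l_2)$ hold, yielding $\Gamma, \varphi \models C'[t'^{l'_1}]^{l'_3} @^{l_2} u^{l_u}$, which is Condition 1; Condition 2 is trivial since $l'_2 = l_2$. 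The symmetric case $C = u^{l_u} @^{l_2} C'$ is handled the same way, with the roles of the two arguments swapped in the Substitution lemma.

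The main subtlety — and the step I would be most careful about — is bookkeeping the labels correctly in the inductive step, in particular observing that the label on any application \emph{node strictly enclosing} the redex is preserved by the reduction, so that the "$\Gamma(l'_3) \supseteq \Gamma(l_3)$" side condition of the Substitution lemma is met with equality at every level above the contracted redex. The only place where a label genuinely changes and containment (rather than equality) is used is at the redex itself, where SK Reduction Coherence supplies $\Gamma(l'_1) \subseteq \Gamma(l_1)$; this inclusion then propagates up through each application of the Substitution lemma. A secondary point worth checking is that the new sublabels introduced in the conclusion of an $S$-reduction (the $l_3.S.{\cdot}$ labels) cause no trouble: they appear only inside $t'$, are disjoint from the labels of the surrounding context, and the relevant constraints for them are already available because $\varphi(l_3)$ was forced to be true — this is exactly what the Reduction Coherence lemma has already established, so nothing further is needed here.
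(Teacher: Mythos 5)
Your proof is correct and takes essentially the same route as the paper: the empty context is discharged by SK Reduction Coherence, and the surrounding context is handled by applying the SK Substitution Lemma at the application node(s) above the redex, with the only non-trivial inclusion $\Gamma(l'_1)\subseteq\Gamma(l_1)$ coming from Reduction Coherence at the hole. The paper merely phrases this directly (observing that only the constraints at the application enclosing the hole are affected) rather than as an explicit structural induction on $C$, but the content is identical.
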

\begin{proof}
For the empty context, this follows immediately from the Reduction Coherence Lemma.
For a non-empty context $C$, Condition 2 is trivially true as $l'_2 = l_2$.
For Condition 1, the reduction occurs at either the left child or right child
of an application node in the term tree (as all other nodes are leaves).
Any constraints generated by the context are unchanged and hence remain satisfied.
For the hole in the context, from $\Gamma, \varphi \models C[t^{l_1}]^{l_2}$
we have $\Gamma, \varphi \models t^{l_1}$,
so by Reduction Coherence we get $\Gamma, \varphi \models {t'}^{l'_1}$
with $\Gamma(l'_1) \subseteq \Gamma(l_1)$.
Hence any constraints within ${t'}^{l'_1}$ are satisfied.
That just leaves the constraints generated by the interaction
between the application at the hole of the context and $t'$.
We can apply the Substitution Lemma to the application node to show that they are satisfied,
which gives Condition 1 as required.
\end{proof}

\begin{corollary}[SK Soundness]
If $\Gamma, \varphi \models t^l$ and $t \rightarrow^* t'$ then $\Gamma, \varphi \models {t'}^{l'}$.
\end{corollary}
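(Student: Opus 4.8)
The plan is to prove this by a straightforward induction on the length $k$ of the reduction sequence $t \rightarrow^* t'$, using the Evaluation Coherence theorem as the workhorse for the single-step case. Concretely, I would strengthen the statement slightly so that it asserts the existence of a label $l'$ on $t'$ with $\Gamma, \varphi \models {t'}^{l'}$ (and, if convenient, also $\Gamma(l') \subseteq \Gamma(l)$), so that the label can be threaded through the induction rather than fixed in advance; this matters because a redex appearing at the root of a term changes the label decorating the whole term.

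For the base case $k = 0$ we have $t' = t$, so we may take $l' = l$ and the hypothesis $\Gamma, \varphi \models t^l$ is exactly the conclusion (and $\Gamma(l) \subseteq \Gamma(l)$ trivially). For the inductive step, write the sequence as $t \rightarrow^* t'' \rightarrow t'$ with the initial segment of length $k$. The induction hypothesis supplies a label $l''$ with $\Gamma, \varphi \models {t''}^{l''}$ (and $\Gamma(l'') \subseteq \Gamma(l)$). The final step $t'' \rightarrow t'$ is, by definition of the reduction relation, a top-level $S$- or $K$-reduction performed inside some context, i.e. it has the form $C[s^{m_1}]^{l''} \rightarrow C[{s'}^{m'_1}]^{l'}$. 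Applying the Evaluation Coherence theorem to $\Gamma, \varphi \models C[s^{m_1}]^{l''}$ gives $\Gamma, \varphi \models C[{s'}^{m'_1}]^{l'}$, that is $\Gamma, \varphi \models {t'}^{l'}$, together with $\Gamma(l') \subseteq \Gamma(l'')$; composing with $\Gamma(l'') \subseteq \Gamma(l)$ by transitivity of $\subseteq$ closes the induction.

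I do not expect any genuine obstacle here: all the substantive reasoning — the interaction of argument substitution with the constraints generated by $@$ (handled by the SK Substitution Lemma), the activation bookkeeping via $\varphi$ that exposes the constraints for $t_{S^n}$ on reduction of an $S$, and the appearance of fresh sublabels such as $n.S.L$, $n.S.3$ during $S$-reduction — has already been discharged in the SK Reduction Coherence Lemma and the SK Evaluation Coherence theorem. The only thing demanding a little attention is the bookkeeping just described: making sure the induction is phrased so that the top-level label is existentially quantified and carried along, rather than assumed to be preserved verbatim across each step.
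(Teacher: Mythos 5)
Your proof follows exactly the paper's argument: induction on the length of the reduction sequence, discharging each step with the SK Evaluation Coherence theorem. The extra care you take in existentially quantifying and threading the top-level label (and optionally carrying $\Gamma(l') \subseteq \Gamma(l)$ through the induction) is a sensible tightening of the statement but does not change the substance of the proof.
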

\begin{proof}
By induction over the length of the derivation of $\rightarrow^*$
and application of the Evaluation Coherence Theorem.
\end{proof}

\section{0CFA for SF-Calculus}
We now turn our attention to formulation of 0CFA for SF-calculus.
As $F$ is not encodable in $\lambda$-calculus,
we cannot argue for the correctness of our analysis by appeal to the translation $\lambdat$.
Instead, we must follow the style of our formulation for SK-calculus.

\subsection{Labelled Semantics}

Following the labelled reduction for $S$,
we introduce the following labelled reductions for $F$:
\[
\begin{array}{rcll}
F^{l_3} @^{l_4} f^{l_2} @^{l_5} x^{l_1} @^{l_6} y^{l_0} & \rightarrow &
x^{l_1} & \mbox{if $f=S$ or $f=K$} \\
F^{l_3} @^{l_4} (u^{l_7} @^{l_2} v^{l_8}) @^{l_5} x^{l_1} @^{l_6} y^{l_0} & \rightarrow &
(y^{l_0} @^{l_3.F.M} u^{l_7}) @^{l_3.F.3} v^{l_8}
 & \mbox{if $u\ v$ is a factorable form}
\end{array}
\]

\subsection{Analysis Rules}

\begin{figure}[h]

\[
\begin{array}{lrcccl}
\mbox{Base Labels} & \mathbb{N} & \ni & n & & \\
\mbox{Sublabel Names} & & & s & ::= & S.0 \mid S.1 \mid S.2 \mid S.3 \mid S.L \mid S.R \mid \\
 & & & & &                            F.0 \mid F.1 \mid F.2 \mid F.3 \mid F.L \mid F.R \mid F.M \\
\mbox{Labels} & \Label & \ni & l & ::= & n \mid n.s \\
\mbox{Labelled Terms} & & & t & ::= & S^n \mid F^n \mid t_1 @^l t_2 \mid \ang{x}^l \\
\mbox{Abstract Values} & \Abs & \ni & v & ::= & S_0^{n} \mid S_1^{n} \mid S_2^{n} \mid
                                                F_0^{n} \mid F_1^{n} \mid F_2^{n} \mid @^{(l_1, l_2)} \\
\mbox{Abstract Environment} & & & \Gamma & : & \Label \rightarrow \mathcal{P}(\Abs) \\
\mbox{Abstract Activation} & & & \varphi & : & \Label \rightarrow \Bool
\end{array}
\]
\[
\begin{array}{lcl}
\Gamma, \varphi \models S^n & \iff & S_0^{n} \in \Gamma(n) \wedge (\varphi(n) \Rightarrow \Gamma,\varphi \models t_{S^n}) \\
\Gamma, \varphi \models F^n & \iff & \phantom{\mbox{} \wedge \mbox{}} F_0^{n} \in \Gamma(n) \\
& & \mbox{} \wedge \varphi(n) \Rightarrow (\exists n_0 . S_0^{n_0} \in \Gamma(n.F.0) \vee F_0^{n_0} \in \Gamma(n.F.0))
               \Rightarrow \Gamma(n.F.1) \subseteq \Gamma(n.F.3) \\
& & \mbox{} \wedge \varphi(n) \Rightarrow (\exists n_0 . S_1^{n_0} \in \Gamma(n.F.0) \vee S_2^{n_0} \in \Gamma(n.F.0) \vee \mbox{} \\
& & \phantom{\mbox{} \wedge \varphi(n) \Rightarrow \exists n_0 .} F_1^{n_0} \in \Gamma(n.F.0) \vee F_2^{n_0} \in \Gamma(n.F.0) )
\Rightarrow \Gamma,\varphi \models t_{F^n} \wedge \mbox{} \\
& & \phantom{\mbox{} \wedge \varphi(n) \Rightarrow \mbox{}}
\forall @^{l_1,l_2} \in \Gamma(n.F.0) . \Gamma(l_1) \subseteq \Gamma(n.F.L) \wedge \Gamma(l_2) \subseteq \Gamma(n.F.R) \\
\Gamma, \varphi \models t_1^{l_1} @^{l_3} t_2^{l_2} & \iff &
    \phantom{\mbox{} \wedge \mbox{}} \Gamma, \varphi \models t_1 \wedge \Gamma, \varphi \models t_2 \\
& & \mbox{} \wedge \exists @^{l_4, l_5} \in \Gamma(l_3) . \Gamma(l_1) \subseteq \Gamma(l_4) \wedge \Gamma(l_2) \subseteq \Gamma(l_5) \\
& & \mbox{} \wedge \forall S_0^n \in \Gamma(l_1) . \Gamma(l_2) \subseteq \Gamma(n.S.0) \wedge S_1^n \in \Gamma(l_3) \\
& & \mbox{} \wedge \forall S_1^n \in \Gamma(l_1) . \Gamma(l_2) \subseteq \Gamma(n.S.1) \wedge S_2^n \in \Gamma(l_3) \\
& & \mbox{} \wedge \forall S_2^n \in \Gamma(l_1) . \Gamma(l_2) \subseteq \Gamma(n.S.2) \wedge \Gamma(n.S.3) \subseteq \Gamma(l_3)
    \wedge \varphi(n) \\
& & \mbox{} \wedge \forall F_0^n \in \Gamma(l_1) . \Gamma(l_2) \subseteq \Gamma(n.F.0) \wedge F_1^n \in \Gamma(l_3) \\
& & \mbox{} \wedge \forall F_1^n \in \Gamma(l_1) . \Gamma(l_2) \subseteq \Gamma(n.F.1) \wedge F_2^n \in \Gamma(l_3) \\
& & \mbox{} \wedge \forall F_2^n \in \Gamma(l_1) . \Gamma(l_2) \subseteq \Gamma(n.F.2) \wedge \Gamma(n.F.3) \subseteq \Gamma(l_3)
    \wedge \varphi(n) \\
\Gamma, \varphi \models \ang{x}^l & \iff & \mathit{true}
\end{array}
\]
\[
\begin{array}{rcl}
t_{S^n} & \defeq & (\ang{f}^{n.S.0} @^{n.S.L} \ang{x}^{n.S.2}) @^{n.S.3} (\ang{g}^{n.S.1} @^{n.S.R} \ang{x}^{n.S.2}) \\
t_{F^n} & \defeq & (\ang{y}^{n.F.2} @^{n.F.M} \ang{u}^{n.F.L}) @^{n.F.3} \ang{v}^{n.F.R}
\end{array}
\]

\caption{0CFA for SF-calculus}
\label{fig:0cfa-sf}
\end{figure}

There are two main problems to consider in analysing $F$:
how to determine whether the 1st argument is a factorable form
and, when that argument is a factorable form,
how to deconstruct its abstract representation.

Concerning the first problem,
if we think back to our analysis for SK-calculus,
a term $t^l$ might evaluate to an atom $S^n$ or $F^n$ if (for some $n$)
$S^n_0 \in \Gamma(l)$ or $K^n_0 \in \Gamma(l)$.
Its normal form might be a non-atomic term
if $\Gamma(l)$ contains any other abstract values.
We can use the same idea for SF-calculus,
except with $F^n_0$ in place of $K^n_0$.

As for the second problem,
in order to deconstruct abstract values,
we introduce a new type of abstract value $@^{l_1,l_2}$.
Intuitively, the abstract value indicates that any concrete value was produced by
applying a term approximated by $\Gamma(l_1)$
to a term approximated by $\Gamma(l_2)$.
The resulting analysis is shown in Figure~\ref{fig:0cfa-sf}.

We have reused the analysis rules for $S$.
The rules for $F$ are mostly very similar.
This is to be expected, as they both take 3 arguments.
In the rule for $\Gamma, \varphi \models F^n$,
there are two separate sets of constraints
that can be activated by $\varphi(n)$,
corresponding to the two reduction rules.
Both involve a further condition that corresponds to
testing whether the 1st argument may be atomic or a factorable form.
The conclusion to the first, corresponding to the atomic case,
is similar to the last rule for $K$ in
$\Gamma, \varphi \models t_1^{l_1} @^{l_3} t_2^{l_2}$
of the analysis for SK-calculus.
The conclusion to the second, which handles factorisation,
introduces the constraints generated by the applicative term $t_{F^n}$
in a similar style to the case for $S$ and $t_{S^n}$.
However, it also adds new constraints to $t_{F^n}$
corresponding to the factorisation of the 1st argument.

There is also a new constraint for $\Gamma, \varphi \models t_1^{l_1} @^{l_3} t_2^{l_2}$ that
introduces abstract values of the form $@^{l_1,l_2}$.
When analysing a term, this is easily satisfied
by setting $@^{l_1,l_2} \in \Gamma(l_3)$.
The slightly more complicated constraint here is necessary
to ensure coherence of the analysis with evaluation.

A term $t^l$ can be analysed by finding a $\Gamma$ and $\varphi$
such that $\Gamma, \varphi \models t^l$.
This is done by solving the constraints using a fixed point process,
much as with 0CFA for $\lambda$-calculus.
We need only consider $\mathcal{O}(n)$ abstract values
(corresponding to nodes in the term tree of $t$),
so we retain the polynomial time complexity of 0CFA.

Consider once again the example of applying the identity function to itself.
The corresponding SF-calculus term and its analysis are shown in
Figure~\ref{fig:sf-example-analysis};
note that $S^6_2 \in \Gamma(18)$, indicating the result correctly.

\begin{figure}
\[
\begin{array}{rclcrcl}
\Gamma(0) & = & \{ F^0_0 \} & &
\Gamma(1) & = & \{ F^1_0 \} \\
\Gamma(1.F.0) & = & \{ F^0_0 \} & &
\Gamma(2) & = & \{ F^1_1, @^{(1,0)} \} \\
\Gamma(3) & = & \{ F^3_0 \} & &
\Gamma(4) & = & \{ F^4_0 \} \\
\Gamma(4.F.0) & = & \{ F^3_0 \} & &
\Gamma(5) & = & \{ F^4_1, @^{(4,3)} \} \\
\Gamma(6) & = & \{ S^6_0 \} & &
\Gamma(6.S.0) & = & \{ F^4_1, @^{(4,3)} \} \\
\Gamma(6.S.1) & = & \{ F^1_1, @^{(1,0)} \} & &
\Gamma(7) & = & \{ S^6_1, @^{(6,5)} \} \\
\Gamma(8) & = & \{ S^6_2, @^{(7,2)} \} & &
\Gamma(9) & = & \{ F^9_0 \} \\
\Gamma(10) & = & \{ F^{10}_0 \} & &
\Gamma(10.F.0) & = & \{ F^9_0 \} \\
\Gamma(10.F.1) & = & \{ S^6_2, @^{(7,2)} \} & &
\Gamma(11) & = & \{ F^{10}_1, @^{(10,9)} \} \\
\Gamma(12) & = & \{ F^{12}_0 \} & &
\Gamma(13) & = & \{ F^{13}_0 \} \\
\Gamma(13.F.0) & = & \{ F^{12}_0 \} & &
\Gamma(13.F.1) & = & \{ S^6_2, @^{(7,2)} \} \\
\Gamma(13.F.3) & = & \{ S^6_2, @^{(7,2)} \} & &
\Gamma(13.F.2) & = & \{ F^{10}_2, @^{(15.S.1,15.S.2)} \} \\
\Gamma(14) & = & \{ F^{13}_1, @^{(13,12)} \} & &
\Gamma(15) & = & \{ S^{15}_0 \} \\
\Gamma(15.S.0) & = & \{ F^{13}_1, @^{(13,12)} \} & &
\Gamma(15.S.1) & = & \{ F^{10}_1, @^{(10,9)} \} \\
\Gamma(15.S.2) & = & \{ S^6_2, @^{(7,2)} \} & &
\Gamma(15.S.3) & = & \{ S^6_2, @^{(15.S.L,15.S.R)}, @^{(7,2)} \} \\
\Gamma(15.S.L) & = & \{ F^{13}_2, @^{(15.S.0,15.S.2)} \} & &
\Gamma(16) & = & \{ S^{15}_1, @^{(15,14)} \} \\
\Gamma(15.S.R) & = & \{ F^{10}_2, @^{(15.S.1,15.S.2)} \} & &
\Gamma(17) & = & \{ S^{15}_2, @^{(16,11)} \} \\
\Gamma(18) & = & \{ \lefteqn{S^6_2, @^{(15.S.L,15.S.R)}, @^{(17,8)}, @^{(7,2)} \} } \\
\varphi(13) & = & \mathit{true} & &
\varphi(15) & = & \mathit{true}
\end{array}
\]
\[
\Gamma, \varphi \models (S^{15} @^{16} (F^{13} @^{14} F^{12}) @^{17} (F^{10} @^{11} F^{9}))
@^{18} (S^{6} @^{7} (F^{4} @^{5} F^{3}) @^{8} (F^{1} @^{2} F^{0}))
\]
\caption{Solution of the analysis for application of identity to itself in SF-calculus.}
\label{fig:sf-example-analysis}
\end{figure}

\subsection{Correctness}

Correctness of the analysis follows by the same sequence of results as for SK-calculus.

\begin{lemma}[SF Substitution]
If $\Gamma, \varphi \models t_1^{l_1} @^{l_3} t_2^{l_2}$,
as well as $\Gamma, \varphi \models {t'_1}^{l'_1}$ and $\Gamma, \varphi \models {t'_2}^{l'_2}$,
with $\Gamma(l'_1) \subseteq \Gamma(l_1)$,
$\Gamma(l'_2) \subseteq \Gamma(l_2)$
and $\Gamma(l'_3) \supseteq \Gamma(l_3)$
then $\Gamma, \varphi \models {t'_1}^{l'_1} @^{l'_3} {t'_2}^{l'_2}$.
\end{lemma}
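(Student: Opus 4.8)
The plan is to mimic the one-line proof of the SK Substitution Lemma: the claim should again follow by straightforward inspection of the constraints generated by an application node, since those constraints depend on $\Gamma(l_1)$, $\Gamma(l_2)$ and $\Gamma(l_3)$ only through membership tests of the form $v \in \Gamma(l_1)$, containment tests $\Gamma(l_2) \subseteq \Gamma(\cdot)$, and containment tests $\Gamma(\cdot) \subseteq \Gamma(l_3)$. The hypotheses $\Gamma(l'_1)\subseteq\Gamma(l_1)$, $\Gamma(l'_2)\subseteq\Gamma(l_2)$ and $\Gamma(l'_3)\supseteq\Gamma(l_3)$ are exactly of the right variance to carry each conjunct of $\Gamma,\varphi\models t_1^{l_1}@^{l_3}t_2^{l_2}$ over to $\Gamma,\varphi\models {t'_1}^{l'_1}@^{l'_3}{t'_2}^{l'_2}$. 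The hypotheses $\Gamma,\varphi\models{t'_1}^{l'_1}$ and $\Gamma,\varphi\models{t'_2}^{l'_2}$ discharge the first two conjuncts (the recursive calls on the subterms) directly.

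First I would handle the six $\forall$-conjuncts governing $S_i^n$ and $F_i^n$: for each, a hypothesis $v \in \Gamma(l'_1)$ gives $v \in \Gamma(l_1)$ by $\Gamma(l'_1)\subseteq\Gamma(l_1)$, so the corresponding consequent holds for the primed node; its conclusion is either a membership $S_{i+1}^n \in \Gamma(l_3) \subseteq \Gamma(l'_3)$ (fine by the reverse inclusion on the third label), or a subset fact $\Gamma(l_2)\subseteq\Gamma(n.S.i)$ which combined with $\Gamma(l'_2)\subseteq\Gamma(l_2)$ yields $\Gamma(l'_2)\subseteq\Gamma(n.S.i)$, or a fact $\Gamma(n.S.3)\subseteq\Gamma(l_3)\subseteq\Gamma(l'_3)$ together with $\varphi(n)$ (unchanged). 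None of the sublabel components $n.S.i$, $n.F.i$ are affected, so the activated constraints $t_{S^n}$, $t_{F^n}$ and the $F$-rule machinery need no attention. Next I would treat the one genuinely new conjunct, $\exists\,@^{l_4,l_5}\in\Gamma(l_3).\ \Gamma(l_1)\subseteq\Gamma(l_4)\wedge\Gamma(l_2)\subseteq\Gamma(l_5)$: from the witness for the unprimed application, the same $@^{l_4,l_5}$ lies in $\Gamma(l_3)\subseteq\Gamma(l'_3)$, and $\Gamma(l'_1)\subseteq\Gamma(l_1)\subseteq\Gamma(l_4)$, $\Gamma(l'_2)\subseteq\Gamma(l_2)\subseteq\Gamma(l_5)$ by transitivity, so the same witness works.

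The main obstacle, such as it is, is simply bookkeeping: making sure that the existential conjunct really does go through with the same witness rather than requiring a fresh one — but this is immediate because the witness is an element of $\Gamma(l_3)$, and $\Gamma(l_3)$ only grows when we pass to $l'_3$. A secondary point to state carefully is that the lemma as phrased refers to $\Gamma(l'_3)$ in the hypothesis even though $l'_3$ does not appear in the conclusion's label position in a way that constrains it further; no sublabels of $l_3$ or $l'_3$ occur in the application constraints, so there is nothing else to check. I would therefore conclude, as in the SK case, that the result is trivial by inspection of the constraints generated by $@$, spelling out only the existential conjunct explicitly since it is the one feature absent from the SK analysis.
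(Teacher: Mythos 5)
Your proposal is correct and follows exactly the paper's route: the paper also dismisses the six monotone $\forall$-conjuncts as immediate by inspection and singles out the existential conjunct $\exists\,@^{l_4,l_5}\in\Gamma(l_3)$ as the one point needing care, observing (as you do) that the same witness survives because $\Gamma(l_3)\subseteq\Gamma(l'_3)$ and the inclusions compose by transitivity. Your remark that the witness need not be refreshed is precisely the paper's point that the lemma would fail with the naive constraint $@^{l_1,l_2}\in\Gamma(l_3)$.
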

\begin{proof}
Again, trivial by inspection of the constraints generated by $@$.
It is at this point that the correct formulation of the constraint
$\exists @^{l_4,l_5} \in \Gamma(l_3) . \Gamma(l_1) \subseteq \Gamma(l_4) \wedge \Gamma(l_2) \subseteq \Gamma(l_5)$
is important;
the lemma does not hold if we use the simpler constraint $@^{l_1,l_2} \in \Gamma(l_3)$.
\end{proof}

\begin{lemma}[SF Reduction Coherence]
For any top-level reduction $t^l \rightarrow {t'}^{l'}$,
if $\Gamma, \varphi \models t^l$ then
\emph{1)} $\Gamma, \varphi \models {t'}^{l'}$ and
\emph{2)} $\Gamma(l') \subseteq \Gamma(l)$.
\end{lemma}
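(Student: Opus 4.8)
The plan is to follow the proof of the SK Reduction Coherence Lemma almost verbatim, case-splitting on the shape of the top-level reduction and rebuilding $\Gamma, \varphi \models {t'}^{l'}$ by applying the SF Substitution Lemma at each application node of $t'$. By the labelled semantics there are three cases: the $S$-reduction, the atomic $F$-reduction $F\ f\ x\ y \rightarrow x$ (with $f$ an atom, $S$ or $F$), and the factorising $F$-reduction $F\ (u\ v)\ x\ y \rightarrow y\ u\ v$ (with $u\ v$ a factorable form). In each case I would descend the spine of nested applications in $t^l$, read off which abstract values the constraints force into which $\Gamma(\cdot)$ and which $\varphi(\cdot)$ they make true, and then feed these facts to the Substitution Lemma.

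The $S$-case is literally the SK argument (the analysis rules for $S$ were reused unchanged): the three arguments force $\varphi(n)$, hence $\Gamma, \varphi \models t_{S^n}$; they supply the inclusions $\Gamma(\cdot) \subseteq \Gamma(n.S.i)$ needed to instantiate the leaves of $t_{S^n}$; and $\Gamma(n.S.3) \subseteq \Gamma(l)$ gives Condition 2. The Substitution Lemma then turns $t_{S^n}$ into $t'$, giving Condition 1. The atomic $F$-case is brief: the spine forces an $F_2^n$ into some $\Gamma(\cdot)$, hence $\varphi(n)$, along with $\Gamma(l_1) \subseteq \Gamma(n.F.1)$ and $\Gamma(n.F.3) \subseteq \Gamma(l)$ (here $l_1$ is the label of $x$, and $t' = x^{l_1}$). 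Since $f$ is an atom, $\Gamma, \varphi \models f^{l_2}$ puts an $S_0^{n_0}$ or $F_0^{n_0}$ into $\Gamma(l_2) \subseteq \Gamma(n.F.0)$, which activates the first $\varphi(n)$-guarded clause of the $F$-rule and yields $\Gamma(n.F.1) \subseteq \Gamma(n.F.3)$. Chaining, $\Gamma(l_1) \subseteq \Gamma(n.F.1) \subseteq \Gamma(n.F.3) \subseteq \Gamma(l)$ is Condition 2, while $\Gamma, \varphi \models x^{l_1}$ holds already as a subterm constraint of $t^l$, giving Condition 1.

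The factorising $F$-case is the heart of the proof and the step I expect to be the main obstacle, since it is where the $@$-value machinery earns its keep. Here $t'$ is the applicative skeleton $t_{F^n} = (\ang{y}^{n.F.2} @^{n.F.M} \ang{u}^{n.F.L}) @^{n.F.3} \ang{v}^{n.F.R}$ with its three leaves replaced by the actual subterms $y^{l_0}$, $u^{l_7}$, $v^{l_8}$ of $t^l$, and $l = l_6$, $l' = n.F.3$. I would first show the second $\varphi(n)$-guarded clause of the $F$-rule fires: since $u\ v$ is a factorable form that is an application, it is one of $S\ w$, $F\ w$, $S\ w\ w'$, $F\ w\ w'$, and in each sub-case expanding $\Gamma, \varphi \models (u^{l_7} @^{l_2} v^{l_8})$ deposits one of $S_1^{n_0}, S_2^{n_0}, F_1^{n_0}, F_2^{n_0}$ into $\Gamma(l_2) \subseteq \Gamma(n.F.0)$, so the clause delivers both $\Gamma, \varphi \models t_{F^n}$ and $\forall @^{(l_a, l_b)} \in \Gamma(n.F.0).\ \Gamma(l_a) \subseteq \Gamma(n.F.L) \wedge \Gamma(l_b) \subseteq \Gamma(n.F.R)$. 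The delicate point is routing the labels $l_7, l_8$ of the two components of $u\ v$ into $n.F.L$ and $n.F.R$: the $\exists @^{(l_a, l_b)} \in \Gamma(l_2)$ constraint generated by the inner application of $t^l$ supplies some $@^{(l_a, l_b)} \in \Gamma(l_2) \subseteq \Gamma(n.F.0)$ with $\Gamma(l_7) \subseteq \Gamma(l_a)$ and $\Gamma(l_8) \subseteq \Gamma(l_b)$, whence $\Gamma(l_7) \subseteq \Gamma(n.F.L)$ and $\Gamma(l_8) \subseteq \Gamma(n.F.R)$. Together with $\Gamma(l_0) \subseteq \Gamma(n.F.2)$ from the spine and $\Gamma(n.F.3) \subseteq \Gamma(l)$ (Condition 2), the Substitution Lemma applied first to the inner application $\ang{y}^{n.F.2} @^{n.F.M} \ang{u}^{n.F.L}$ and then to the outer one converts $t_{F^n}$ into $t'$, giving Condition 1. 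Exactly as flagged in the proof of the Substitution Lemma, it is the weaker $\exists @^{(l_a, l_b)}$ formulation, rather than $@^{(l_1, l_2)} \in \Gamma(l_3)$, that lets this chain of inclusions close.
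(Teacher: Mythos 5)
Your proof is correct and follows essentially the same route as the paper's: the same case split (the paper folds your two $F$-cases into subcases of a single $F$-case), the same chains of inclusions along the spine, and the same use of the $\exists\, @^{(l_a,l_b)} \in \Gamma(l_2) \subseteq \Gamma(n.F.0)$ constraint combined with the $\forall @$ clause of the $F$-rule to route $\Gamma(l_7)$ and $\Gamma(l_8)$ into $\Gamma(n.F.L)$ and $\Gamma(n.F.R)$ before invoking the Substitution Lemma. The only presentational difference is that the paper explicitly flags, in the $S$-case, that the new abstract-$@$ constraints arising at $@^{n.S.L}$ and $@^{n.S.R}$ must be rechecked, which your blanket appeal to the Substitution Lemma at every application node of $t'$ already covers.
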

\begin{proof}
Case split on the two kinds of top-level reduction ($S$ and $F$).

\emph{Case $S$:} Largely as for SK-calculus.
The only new point is that we must check the constraints on the abstract $@$ values
generated by $@^{l_3.S.L}$ and $@^{l_3.S.R}$ still hold.

\emph{Case $F$:}
We have $t^l = F^{l_3} @^{l_4} f^{l_2} @^{l_5} x^{l_1} @^{l_6} y^{l_0}$.
We begin as for $S$ by expanding the constraints for $\Gamma, \varphi \models t^l$.
We get
$F^{l_3}_0 \in \Gamma(l_3)$,
$F^{l_3}_1 \in \Gamma(l_4)$ and
$F^{l_3}_2 \in \Gamma(l_5)$;
also
$\Gamma(l_2) \subseteq \Gamma(l_3.F.0)$,
$\Gamma(l_1) \subseteq \Gamma(l_3.F.1)$ and
$\Gamma(l_0) \subseteq \Gamma(l_3.F.2)$;
as well as $\Gamma(l_3.F.3) \subseteq \Gamma(l_6)$ and $\varphi(l_3)$.
Now there are two subcases depending on whether $f$ is factorable.

\emph{Subcase $f$ is not factorable:}
We have $t'^{l'} = x^{l_1}$.
From $\Gamma, \varphi \models t^l$
we get $\Gamma, \varphi \models x^{l_1}$, proving Condition 1.
As $f^{l_2}$ is not factorable,
either $f = F^{l_2}$ and $F^{l_2}_0 \in \Gamma(l_2) \subseteq \Gamma(l_3.F.0)$
or $f = S^{l_2}$ and $S^{l_2}_0 \in \Gamma(l_2) \subseteq \Gamma(l_3.F.0)$.
In either case, noting we already have $\Gamma, \varphi \models F^{l_3}$ and $\varphi(l_3)$,
we get $\Gamma(l_3.F.1) \subseteq \Gamma(l_3.F.3)$.
Combining this with $\Gamma(l_1) \subseteq \Gamma(l_3.F.1)$ and $\Gamma(l_3.F.3) \subseteq \Gamma(l_6)$
gives $\Gamma(l_1) \subseteq \Gamma(l_6)$, proving Condition 2.

\emph{Subcase $f$ is factorable:}
We have $f^{l_2} = u^{l_7} @^{l_2} v^{l_8}$
and $t'^{l'} = (y^{l_0} @^{l_3.F.M} u^{l_7}) @^{l_3.F.3} v^{l_8}$.
Condition 2 now follows immediately from $\Gamma(l_3.F.3) \subseteq \Gamma(l_6)$.
As $f$ is factorable, either $u^{l_7}$ or its left child $w^{l_9}$ (if it has one) is $S$ or $F$.
Hence one of $F^{l_7}_1$, $S^{l_7}_1$, $F^{l_9}_2$ and $F^{l_9}_2$ must be in $\Gamma(l_3.F.0)$.
Noting $\Gamma, \varphi \models F^{l_3}$ and $\varphi(l_3)$,
we now have $\Gamma, \varphi \models t_{F^n}$,
as well as a constraint relating abstract $@$ values in $\Gamma(l_3.F.0)$
with $\Gamma(l_3.F.L)$ and $\Gamma(l_3.F.R)$.
Similarly to the case for $S$, in order to prove Condition 1,
we note that we can obtain $t'^{l'}$ by substituting $y^{l_0}$, $u^{l_7}$ and $v^{l_8}$ into $t_{F^n}$,
so we need to show that the Substitution Lemma is applicable.
For $y^{l_0}$ this is easy, as we already have $\Gamma(l_0) \subseteq \Gamma(l_3.F.2)$.
For $u^{l_7}$ and $v^{l_8}$, there must exist some $@^{l_A,l_B} \in \Gamma(l_2)$
with $\Gamma(l_7) \subseteq \Gamma(l_A)$ and $\Gamma(l_8) \subseteq \Gamma(l_B)$.
But $\Gamma(l_2) \subseteq \Gamma(l_3.F.0)$, so $@^{l_A,l_B} \in \Gamma(l_3.F.0)$.
Then, using the above constraint on abstract $@$ values,
$\Gamma(l_A) \subseteq \Gamma(l_3.F.L)$ and $\Gamma(l_B) \subseteq \Gamma(l_3.F.R)$.
Hence $\Gamma(l_7) \subseteq \Gamma(l_3.F.L)$ and $\Gamma(l_8) \subseteq \Gamma(l_3.F.R)$,
so we can apply the Substitution Lemma to prove Condition 1.
\end{proof}

\begin{theorem}[SF Evaluation Coherence]
For any reduction in context $C[t^{l_1}]^{l_2} \rightarrow C[t'^{l'_1}]^{l'_2}$,
if $\Gamma, \varphi \models C[t^{l_1}]^{l_2}$ then
\emph{1)} $\Gamma, \varphi \models C[t'^{l'_1}]^{l'_2}$ and
\emph{2)} $\Gamma(l'_2) \subseteq \Gamma(l_2)$.
\end{theorem}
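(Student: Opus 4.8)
The plan is to follow the proof of the SK Evaluation Coherence Theorem almost verbatim, with the SK Reduction Coherence and Substitution Lemmas replaced by their SF counterparts. For the empty context, the statement is precisely the SF Reduction Coherence Lemma, so nothing further is needed. For a non-empty context $C$, the hole lies at the left or right child of some application node $u @^{l_3} v$ of the term tree (every other node being a leaf), and the reduced subterm $t^{l_1}$ is one of the two children of that node; the sibling and the node label $l_3$ are untouched by the reduction. Since $l'_2 = l_2$, Condition 2 is immediate.

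For Condition 1, from $\Gamma, \varphi \models C[t^{l_1}]^{l_2}$ we extract $\Gamma, \varphi \models t^{l_1}$, and SF Reduction Coherence gives $\Gamma, \varphi \models {t'}^{l'_1}$ together with $\Gamma(l'_1) \subseteq \Gamma(l_1)$; hence every constraint internal to ${t'}^{l'_1}$ (including any fresh $@^{l_4, l_5}$-constraints and new sublabels introduced by an $S$- or $F$-reduction) is already satisfied. Every constraint generated by the surrounding context mentions only labels and abstract values that the reduction leaves unchanged, so these too remain satisfied. The one remaining obligation is the constraint attached to the application node $u @^{l_3} v$ at the hole, and here we invoke the SF Substitution Lemma: taking the reduced child's old and new labels as $l_1$ and $l'_1$, the unchanged sibling as both $t_2$ and $t'_2$, and $l'_3 = l_3$, the hypotheses $\Gamma(l'_1) \subseteq \Gamma(l_1)$, $\Gamma(l'_2) \subseteq \Gamma(l_2)$ and $\Gamma(l'_3) \supseteq \Gamma(l_3)$ all hold (the last two with equality), so the constraint for the reduced application node is satisfied. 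This yields $\Gamma, \varphi \models C[{t'}^{l'_1}]^{l'_2}$.

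I expect no serious obstacle here, since the real work was done in proving SF Reduction Coherence; the remainder is bookkeeping, modulo keeping the label renamings straight. The single point that must not be glossed over is that the appeal to the SF Substitution Lemma at the hole depends essentially on the $\exists @^{l_4,l_5} \in \Gamma(l_3)$ formulation of the application constraint rather than the naive $@^{l_1,l_2} \in \Gamma(l_3)$: when the child's abstraction shrinks from $\Gamma(l_1)$ to $\Gamma(l'_1)$, the existential witness for the parent node must survive, which it does by transitivity of $\subseteq$. The $F$-reduction deconstructs its first argument and so emits new abstract $@$-values, but these arise entirely within ${t'}^{l'_1}$ and are therefore already covered by SF Reduction Coherence rather than by anything in this theorem.
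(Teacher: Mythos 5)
Your proof is correct and follows essentially the same route as the paper: empty context handled by SF Reduction Coherence, non-empty context by noting the surrounding constraints are unchanged, the internal constraints of $t'$ come from Reduction Coherence, and the application node at the hole is discharged by the SF Substitution Lemma. The paper likewise singles out the abstract $@$-value constraint at the hole as the one new point, handled exactly as you describe via the existential formulation.
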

\begin{proof}
The proof is as for SK-calculus.
The only point of note is that
the constraints between 
the application at the hole of the context and $t'$
now include a constraint on an abstract $@$ value.
However, this is still handled by using the Substitution Lemma.
\end{proof}

\begin{corollary}[SF Soundness]
If $\Gamma, \varphi \models t^l$ and $t \rightarrow^* t'$ then $\Gamma, \varphi \models {t'}^{l'}$.
\end{corollary}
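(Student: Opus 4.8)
The plan is a straightforward induction on the length $k$ of the derivation of $t \rightarrow^* t'$, mirroring the proof of SK Soundness, with the SF Evaluation Coherence Theorem doing all the real work at each step. To make the induction self-sustaining I would first strengthen the statement slightly so that it records the label-containment that Evaluation Coherence already yields: \emph{if $\Gamma, \varphi \models t^l$ and $t \rightarrow^* t'$, then $\Gamma, \varphi \models {t'}^{l'}$ and $\Gamma(l') \subseteq \Gamma(l)$}, where $l'$ is the label that the given reduction sequence assigns to the root of $t'$. The corollary as stated is then the first conjunct.

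For the base case $k = 0$ we have $t' = t$ and $l' = l$, so both conjuncts hold trivially, the second by reflexivity of $\subseteq$. For the inductive step, write the derivation as $t^l \rightarrow^* {t''}^{l''} \rightarrow {t'}^{l'}$ with the prefix of length $k-1$. By the induction hypothesis, $\Gamma, \varphi \models {t''}^{l''}$ and $\Gamma(l'') \subseteq \Gamma(l)$. The final step ${t''}^{l''} \rightarrow {t'}^{l'}$ is a reduction in some context, i.e.\ it is of the form $C[s^{m}]^{l''} \rightarrow C[{s'}^{m'}]^{l'}$ covered exactly by the SF Evaluation Coherence Theorem, so we obtain $\Gamma, \varphi \models {t'}^{l'}$ and $\Gamma(l') \subseteq \Gamma(l'')$. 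Transitivity of $\subseteq$ gives $\Gamma(l') \subseteq \Gamma(l)$, closing the induction.

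The only points that need a word of care — and none of them is a real obstacle — are bookkeeping about labels rather than anything substantive. First, because a reduction at the root rewrites the top-level label, the $l'$ in the conclusion is genuinely determined by the reduction sequence rather than fixed in advance; the strengthened invariant makes this explicit and is what lets the single-step results compose. Second, reductions of $S$ and $F$ expose sublabels (the $n.S.\cdot$ and $n.F.\cdot$ families) that were not syntactically present in $t$; as already noted when the labelled semantics was set up, all such labels lie in the finite, statically derivable set generated from the labels of $t$, so $\Gamma$ and $\varphi$ remain defined on them and every intermediate judgement $\Gamma, \varphi \models {t''}^{l''}$ is well-formed. With these observations in place the argument is purely formal, exactly as in the SK case, so the real content of the result resides entirely in the Substitution and Reduction Coherence Lemmas already established.
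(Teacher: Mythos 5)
Your proof is correct and follows exactly the paper's approach: induction on the length of the $\rightarrow^*$ derivation with the SF Evaluation Coherence Theorem discharging each step. The strengthening to carry $\Gamma(l') \subseteq \Gamma(l)$ through the induction is a sensible piece of bookkeeping that the paper leaves implicit.
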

\begin{proof}
As for SK-calculus.
\end{proof}

\section{Evaluation}
It is currently difficult to evaluate meaningfully the usefulness of this analysis.
If one wishes to evaluate an analysis for untyped $\lambda$-calculus,
then by using the usual Church encodings for numbers, lists and other datatypes,
one can easily test it against examples from any textbook on functional programming.
Similarly, using the translation $\unlambdat$,
it is not much harder to evaluate an analysis for SK-calculus in this way.

There is a straightforward translation from SK-calculus to SF-calculus:
simply replace $K$ with $F F$.
It is easy for our analysis to determine that the only possible first argument
to the first $F$ is just $F$, and hence that it will never be factorable.
This activates constraints that are very similar to those for $K$
in the analysis for SK-calculus.
Thus it makes no difference to the precision of 0CFA
whether it is done on a term of SK-calculus
or the same term translated into SF-calculus.

While this is encouraging in that it suggests it is reasonable
to refer to our analysis as 0CFA,
it does not really tell us anything interesting about the precision of the analysis.
The translated program does not use the power of factorisation in a meaningful way,
or indeed (considering that only one reduction of $F$ is used) at all.
There is no interesting suite of programs written in SF-calculus
against which to test the analysis;
nor is there any existing idiomatic translation from any higher level language to SF-calculus.

If we consider only programs that do not deconstruct code
(such as straightforward translations of SK-calculus programs),
our analysis has the same strengths and weaknesses as other forms of 0CFA:
it can analyse some higher order control flow within a program,
but loses precision when the same function is used in two different contexts.

If we consider programs that do
inspect and manipulate the internal structure of code,
there are three further places where we can lose precision.
Firstly, we cannot always tell whether an argument to $F$ will be factorable or not
and in this case, we over-approximate its behaviour to cover both cases.
Our technique essentially works by tracking how many arguments a combinator has been given.
This is unlikely to work well when a term is simultaneously used recursively and partially applied.
Secondly, when we abstractly factorise a term, we lack any contextual information,
so if two applications flow into the same factorisation, we will conflate their factors.
This is similar to the imprecision introduced by lack of context
when using the same function in two different places in ordinary 0CFA.
Finally, while we make a reasonable attempt to track reduction of a term
for the purpose of determining whether its normal form is an atom,
we have no way of discarding non-normal forms when we factorise abstractly,
so we may consider the factorisation of terms that are not factorable forms.

\section{Related Work}

0CFA and other forms of control flow analysis have been widely studied;
see the work of Midtgaard~\cite{DBLP:journals/csur/Midtgaard12} for a detailed survey.

To our knowledge, this is the first static analysis for SF-calculus.
There has been some work on analysing other styles of metaprogramming.
For example, Choi and others~\cite{DBLP:conf/popl/ChoiAYT11} consider how to analyse
a form of extensional metaprogramming called staged metaprogramming,
which captures the composition of code templates.
They suggest using an unstaging translation that turns
the metaprogramming constructs into function abstraction and record lookup,
then using other existing analyses.
Our own work considers how to formulate 0CFA in a dynamically typed language
with staged metaprogramming and variable capture~\cite{DBLP:conf/csfw/0001OS13},
with a view to analysing JavaScript's eval construct~\cite{DBLP:conf/pldi/Lester13}.
In a statically typed setting, Berger and Tratt develop a Hoare-style logic~\cite{DBLP:conf/lpar/BergerT10}
for a language with staged metaprogramming.

Intensional metaprogramming has often been ignored because of its semantic difficulties,
or because of the dominance of the idea
that extensionally equal programs ought to be indistinguishable~\cite{DBLP:journals/corr/JayV14}.
ReFLect~\cite{DBLP:journals/jfp/GrundyMO06},
a functional programming language for hardware design and theorem proving,
allows deconstruction of code values,
but this causes difficulties for its type system,
even in a combinatory fragment of the language~\cite{DBLP:journals/corr/MelhamCC13}.

The idea that program code can be deconstructed
and that its structure can influence the control flow of a program
is conceptually similar to the functional programming idiom
of defining functions by pattern-matching over algebraic datatypes.
There has been some work on analysing functional programs from this perspective.
For example, Jones and Andersen present an analysis
that uses tree grammars to over-approximate the structure of
data values that may be produced by a program~\cite{DBLP:journals/tcs/JonesA07}.
Ong and Ramsay suggest a formalism called Pattern Matching Recursion Schemes
that captures the idea in a typed setting and
develop a powerful analysis for it~\cite{DBLP:conf/popl/OngR11}.

\section{Future Work and Conclusions}

We have presented the first static analysis for SF-calculus,
a formalism which presents a promising foundation for writing programs that transform other programs.
We have proved correctness of the analysis and shown that is comparable to standard 0CFA
for programs that do not rely on the ability of $F$ to factor terms,
such as those translated directly from SK-calculus.
From here, there are a number of obvious directions in which to proceed.

Firstly, in order to evaluate the usefulness of the analysis
and to advance our understanding of program transformation,
it would be good to develop a translation from a higher level language that
supports intensional metaprogramming into SF-calculus.
The translation should map code deconstruction to factorisation using $F$.

Secondly, there is scope to improve the precision of the analysis.
For standard 0CFA, tracking context in the style of $k$-CFA
or a pushdown analysis in the style of CFA2 can improve precision significantly.
The same techniques may be applicable here.
It may also be possible to use techniques
from analysing pattern matching and tree datatypes in functional programming languages
to analyse the term trees that constitute programs
in SF-calculus and their pattern-matching and deconstruction with $F$.
However, an important consideration in applying any such technique to SF-calculus
would be the need to distinguish between a non-factorable term $t$
and the factorable term $t'$ to which it may reduce.

Finally, 0CFA is often useful not as an end to itself,
but because it can be combined with other analysis techniques,
for example drawn from abstract interpretation,
in order to improve their precision by reducing the number of
execution paths or reduction sequences that must be considered
to over-approximate the behaviour of a program.
It would be interesting to see if,
combined with such techniques,
this analysis can actually be used to verify properties of
programs that perform program transformations.

\bibliographystyle{eptcs}
\bibliography{refs}
\end{document}